\definecolor{color1}{RGB}{230,57,70}
\definecolor{color2}{RGB}{7,36,66}
\definecolor{color3}{RGB}{50,23,77}
\newcommand\prob\textsc
\newcommand{\problemtitle}[1]{\gdef\@problemtitle{#1}}
\newcommand{\probleminput}[1]{\gdef\@probleminput{#1}}
\newcommand{\problemquestion}[1]{\gdef\@problemquestion{#1}}
\newcommand{\problempromise}[1]{\gdef\@problempromise{#1}}
  \par\addvspace{.5\baselineskip}
  \par\addvspace{.5\baselineskip}
\newcommand{\defas}{\coloneqq}
\newcommand{\lham}{{\normalfont{\textsc{Local Hamiltonian}}}\xspace}
\newcommand{\tilham}{{\normalfont{\textsc{TI-Local Hamiltonian}}}\xspace}
\newcommand{\QMA}{{\normalfont{QMA}}\xspace}
\newcommand{\QMAEXP}{{\normalfont{QMA\textsubscript{EXP}}}\xspace}
\newcommand{\PreciseQMA}{{\normalfont{PreciseQMA}}\xspace}
\newcommand{\PSPACE}{{\normalfont{PSPACE}}\xspace}
\newcommand{\YES}{{\normalfont{YES}}\xspace}
\newcommand{\NO}{{\normalfont{NO}}\xspace}
\DeclareMathOperator{\lmin}{\lambda_\mathrm{min}}
\newcommand{\ii}{\mathrm{i}}
\newcommand{\ee}{\mathrm{e}}
\newcommand{\1}{\mathds 1}
\Crefname{lemma}{Lemma}{Lemmas}
\Crefname{proposition}{Proposition}{Propositions}
\Crefname{definition}{Definition}{Definitions}
\Crefname{theorem}{Theorem}{Theorems}
\Crefname{conjecture}{Conjecture}{Conjectures}
\Crefname{corollary}{Corollary}{Corollaries}
\Crefname{example}{Example}{Examples}
\Crefname{section}{Section}{Sections}
\Crefname{appendix}{Appendix}{Appendices}
\Crefname{figure}{Fig.}{Figs.}
\Crefname{equation}{Eq.}{Eqs.}
\Crefname{table}{Table}{Tables}
\Crefname{item}{Property}{Properties}
\Crefname{remark}{Remark}{Remarks}
\newtheorem{theorem}{Theorem}[section]
\newtheorem{lemma}[theorem]{Lemma}
\newtheorem{corollary}[theorem]{Corollary}
\newtheorem{definition}[theorem]{Definition}
\newtheorem{remark}[theorem]{Remark}
\newcommand{\identity}{\mathds{1}}
\newcommand{\Htarget}{H_{\mathrm{target}}}
\newcommand{\HSF}{H_{\mathrm{SF}}}
\newcommand{\Huniv}{H_{\mathrm{univ}}}
\newcommand{\Hbulk}{H_{\mathrm{bulk}}}
\newcommand{\Hboundary}{H_{\mathrm{boundary}}}
\newcommand{\TPE}{T_\mathrm{PE}}
\newcommand{\MPE}{M_\mathrm{PE}}
\newcommand{\spec}{\mathrm{spec}}
\newcommand{\hyper}{\field{H}^2}
\newcommand{\field}{\mathds}
\DeclareMathOperator{\BigO}{O}
\DeclareMathOperator{\poly}{poly}
\DeclareMathOperator{\Herm}{Herm}
\DeclareMathOperator{\rank}{rank}
\DeclareMathOperator{\spn}{span}
\renewcommand{\paragraph}[1]{\textbf{\textit{#1}}}
\title{Translationally-Invariant\\Universal Quantum Hamiltonians in 1D}
\author[a]{Tamara Kohler}
\author[b,c]{Stephen Piddock}
\author[d]{Johannes Bausch}
\author[a]{Toby Cubitt}
\affil[a]{Department of Computer Science, University College London, UK}
\affil[b]{School of Mathematics, University of Bristol, UK}
\affil[c]{Heilbronn Insitute for Mathematical Research, Bristol, UK}
\affil[d]{Department of Applied Mathematics and Theoretical Physics, University of Cambridge, UK}
\begin{document}

\date{}
\maketitle

\abstract{Recent work has characterised rigorously what it means for one quantum system to simulate another, and demonstrated the existence of universal Hamiltonians---simple spin lattice Hamiltonians that can replicate the entire physics of any other quantum many body system.
Previous universality results have required proofs involving complicated `chains' of perturbative `gadgets'.
In this paper, we derive a significantly simpler and more powerful method of proving universality of Hamiltonians, directly leveraging the ability to encode quantum computation into ground states.
This provides new insight into the origins of universal models, and suggests a deep connection between universality and complexity.
We apply this new approach to show that there are universal models even in translationally invariant spin chains in 1D.
This gives as a corollary a new Hamiltonian complexity result, that the local Hamiltonian problem for translationally-invariant spin chains in one dimension with an exponentially-small promise gap is PSPACE-complete.
Finally, we use these new universal models to construct the first known toy model of 2D--1D holographic duality between local Hamiltonians.
}

\clearpage
\tableofcontents


\section{Introduction}

Analog Hamiltonian simulation is one of the most promising applications of quantum computing in the NISQ (noisy, intermediate scale, quantum) era, because it does not require fully fault-tolerant quantum operations.
Its potential applications have led to an interest in constructing a rigorous theoretical framework to describe Hamiltonian simulation.

Recent work has precisely defined what it means for one quantum system to simulate another~\cite{Cubitt:2017}, and demonstrated that---within very demanding definitions of what it means for one system to simulate another---there exist families of Hamiltonians that are universal, in the sense that they can simulate all other quantum Hamiltonians.
This work was recently extended, with the first construction of a translationally invariant universal family of Hamiltonians \cite{PiddockBausch}.

Previous universality results have relied heavily on using perturbation gadgets, and constructing complicated `chains' of simulations to prove that simple models are indeed universal.
In this paper we present a new, simplified method for proving universality.
This method makes use of another technique from Hamiltonian complexity theory: history state Hamiltonians \cite{Kitaev2002}.
Leveraging the fact that it is possible to encode computation into the ground state of local Hamiltonians, we show that it is possible to prove universality by constructing Hamiltonian models which can compute the energy levels of arbitrary target Hamiltonians.

In order to ensure that the universality constructions preserve the entire physics of the target system (and not just the energy levels), we make use of an idea originally from \cite{universality_of_adiabatic_computation} and used recently in \cite{zhou:18,nirkhe:18,Aharonov_2014}: `idling to enhance coherence'.
Before computing the energy levels of the target system, the computation encoded in the simulator system `idles' in its initial state for time $L$.
By choosing $L$ to be sufficiently large, we can ensure that with high probability there is a fixed set of spins in the simulator system which map directly to the state of the target system.

As well as providing a route to simplifying previous proofs, this `history-state simulation method'  also offers more insight into the origins of universality, and the relationship between universality and complexity.
The classification of two-qubit interactions by their simulation ability in \cite{Cubitt:2017}, which showed that the universal class was precisely the set of QMA-complete interactions, was suggestive of a connection between simulation and complexity.
And a complexity theoretic classification of universal models already exists in the classical case \cite{Cubitt:2016}.
But until now it was not clear whether a connection existed for general quantum interactions, or whether it was merely an accident in the two-qubit case.
Previous methods for proving universality in the quantum case didn't offer a route to classifying universal models, and the more complicated non-commutative structure of quantum Hamiltonians meant that the techniques from the classical proof couldn't be applied.
By demonstrating that it is possible to prove universality by leveraging the ability to encode computation into ground states, we have provided a route to showing that the connection between universality and complexity holds more generally.
In a companion paper \cite{kohler2021general} we make this insight rigorous, by deriving a full complexity theoretic classification of universal quantum Hamiltonians.

We also use the `history-state simulation method'  to provide a simple construction of two new universal models.
Both of these are translationally invariant systems in 1D, and we show that one of these constructions is efficient in terms of the number of spins in the universal construction (yet not in terms of the simulating system's norm):

\begin{theorem} \label{main-intro}
There exists a  two-body interaction $h^{(1)}$ depending on a single parameter $h^{(1)}=h^{(1)}(\phi)$, and a fixed one-body interaction $h^{(2)}$ such that the family of translationally-invariant Hamiltonians on a chain of length $N$,
\begin{equation}
\Huniv(\phi, \Delta,T) = \Delta \sum_{\langle i,j \rangle} h^{(1)}_{i,j}(\phi) + T \sum_{i = 0}^N h^{(2)}_{i},
\end{equation}
is a universal model, where $\Delta$, $T$ and $\phi$ are parameters of the Hamiltonian, and the first sum is over adjacent sites along the chain.
The universal model is efficient in terms of the number of spins in the simulator system.
\end{theorem}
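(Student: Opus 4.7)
The plan is to realise $\Htarget$ as the effective low-energy Hamiltonian of a 1D TI history-state construction, following the history-state simulation method sketched in the introduction. I would organise the argument in four stages.

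Stage 1 (encoding the target). Absorb the classical description of $\Htarget$ into the single real parameter $\phi$: write the local terms of $\Htarget$ to some fixed precision as a binary string, and place that string into the binary expansion of $\phi$. A standard translationally-invariant readout subroutine (of ``doubling and comparing'' type on a single register) can then recover the bits one at a time inside a quantum computation, so that a single two-body form $h_1(\phi)$ suffices to represent every target.

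Stage 2 (history-state construction). Design a circuit $U$ that (a) reconstructs a classical description of $\Htarget$ from $\phi$; (b) idles in its initial state on a designated ``target register'' for a long time $L$; and (c) performs quantum phase estimation of $\Htarget$ on the target register. Use a TI circuit-to-history-state construction of Gottesman--Irani / \cite{PiddockBausch} type to convert $U$ into $\Huniv$: the two-body term $\Delta h_1(\phi)$ implements the clock transitions and gate checks, while the one-body term $T h_2$ energetically enforces a fixed initial configuration and penalises invalid clock states, with $T \gg \Delta$ so that the low-energy subspace is exactly the span of history states of $U$.

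Stage 3 (spectral and coherence analysis). The ground states of $\Huniv$ restricted to the valid subspace are history states indexed by eigenstates $\ket\psi$ of $\Htarget$. Choosing $L$ polynomially large in the target system size $n$ makes the weight of the idling block dominate the history state, so the reduced state on the target register is $\varepsilon$-close to $\ket\psi$, while the energy of the history state encodes the eigenvalue of $\Htarget$ up to additive and multiplicative constants fixed by $\Delta$ and the total circuit length. Composing with a suitable rescaling of $\Delta$ and subtraction of the additive shift (which can be arranged by tuning $T$) gives an encoding satisfying the local isometric simulation definition of \cite{Cubitt:2017}.

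The main obstacle is reconciling strict translational invariance with the need to encode arbitrary targets: the two local terms $h_1,h_2$ must be the same everywhere and for every $\Htarget$, with only the scalars $\phi,\Delta,T$ varying. This forces the binary-expansion-of-$\phi$ trick of Stage~1 together with a TI readout subroutine at the start of $U$, and is where most of the technical work will lie. Once that is handled, efficiency in the number of spins is immediate since both the phase estimation circuit and the idling time $L$ are polynomial in $n$; efficiency in norm is not claimed, and is indeed violated because $T$ and $\Delta$ must be taken large to drive the simulation error down.
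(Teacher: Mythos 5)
Your overall architecture (encode $\Htarget$ in the binary expansion of $\phi$, dovetail a readout routine with phase estimation of $e^{\ii \Htarget\tau}$, idle for time $L$ to keep the encoding local) matches the paper's strategy. But there is a genuine gap at the heart of Stage~3: you never explain \emph{how} the eigenvalue computed by QPE gets converted into an energy of $\Huniv$. The history states of a standard-form Hamiltonian all have energy zero with respect to the history-state Hamiltonian, for every input $\ket\psi$ --- running QPE inside the computation merely writes the eigenphase onto an output tape; it does not by itself shift the energy of that history state. The simulation definition requires the low-energy block of $\Huniv$ to actually equal $\mathcal E(\Htarget)$ up to $\epsilon$, so some term must break the ground-space degeneracy by exactly $\lambda_u$ for each eigenstate $\ket u$. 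In your proposal the one-body term $Th_2$ is assigned a completely different job (enforcing initial configurations and penalising invalid clock states, with $T\gg\Delta$), which leaves no mechanism at all for $\spec(\Htarget)$ to appear. The paper's solution is the step you are missing: QPE outputs the eigenphase as a pair of integers $(a_u,b_u)$ with $\phi_u=\sqrt2 a_u-b_u$, a dovetailed counter then writes $a_u$ copies of a flag state $\ket{\Omega_a}$ and $b_u$ copies of $\ket{\Omega_b}$ into the history, and the \emph{fixed} one-body term $T(\sqrt2\,\Pi_a-\Pi_b)$ summed over sites assigns total energy $\propto \sqrt2 a_u - b_u=\phi_u$ to that history state. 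This density-of-flags encoding is what lets a single fixed one-body interaction produce an arbitrary real energy; "subtraction of the additive shift by tuning $T$" cannot substitute for it, since $T$ is a single scalar and the required shift differs per eigenstate.

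Two further points. First, the roles of $\Delta$ and $T$ are inverted in your Stage~2: in the correct construction the two-body part $\Delta\,\HSF$ is the dominant term ($\Delta \gtrsim T^4/\epsilon$) that pins the low-energy space to the span of history states, and $Th_2$ is the first-order perturbation treated via a first-order simulation lemma; with $T\gg\Delta$ the perturbative analysis breaks down. Second, your explanation of why norm-efficiency fails is off: the obstruction is not that $\Delta,T$ must be large to reduce error, but that extracting the $|\phi|=\poly(n)$ description bits from the phase forces a QPE runtime $T_1=\BigO(|\phi|\,2^{|\phi|})$, hence $T$ (and with it $\Delta$) is exponential in the description length even at fixed accuracy. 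Spin-efficiency survives because the \emph{space} used is only $\BigO(|\phi|)$.
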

By tuning $\phi$, $T$ and $\Delta$, this model can replicate (in the precise sense of~\cite{Cubitt:2017}) all quantum many body physics.

This is the first translationally invariant universal model which is efficient in terms of system size overhead.
Its existence implies that, for problems which preserve hardness under simulation, complexity theoretic results for general Hamiltonians can also apply to 1D, translationally invariant Hamiltonians (though care must be taken when applying this, as the construction is not efficient in the norm of the simulating system).
This is for instance the case for a reduction from a PreciseQMA-hard local Hamiltonian (LH) problem, for which the reduction to a translationally-invariant version preserves the correct promise gap scaling.
This in turn implies that the local Hamiltonian problem remains PSPACE-hard for a promise gap that closes exponentially quickly, even when enforcing translational invariance for the couplings.
This stands in contrast to a promise gap which closes as $1/\poly$ in the system size, in which case the variant is either \QMA (for non-translational invariance) or \QMAEXP (for translational invariance) complete.

Furthermore, \cref{main-intro} allows us to construct the first toy model of holographic duality between local Hamiltonians from a 2D bulk to a 1D boundary, extending earlier work on toy models of holographic duality in~\cite{Pastawski:2015} and~\cite{HQECC-local}.

We also construct a universal model which is described by just two free parameters, but where the model is no longer efficient in the system size overhead:
\begin{theorem}
There exists a fixed two-body interaction $h^{(3)}$ and a fixed one-body interaction $h^{(2)}$ such that the family of translationally-invariant Hamiltonians on a chain of length $N$,
\begin{equation}
\Huniv(\Delta,T) = \Delta \sum_{\langle i,j \rangle} h^{(3)}_{i,j} + T \sum_{i = 0}^N h^{(2)}_{i},
\end{equation}
is a universal model, where $\Delta$ and $T$ are parameters of the Hamiltonian, and the first sum is over adjacent sites along the chain.
\end{theorem}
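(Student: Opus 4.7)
The strategy is to derive this theorem from \cref{main-intro} by \emph{removing} the dependence of the two-body interaction on the external parameter $\phi$, at the cost of allowing the chain length to grow exponentially in the description length of the target. The underlying idea, familiar from the translationally-invariant Hamiltonian complexity literature, is that in a translationally-invariant model the only ``free input'' is the chain length $N$; the parameter $\phi$ can therefore be encoded into $N$ and reconstructed by a computation that is itself embedded as a history state in the ground space.

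Concretely, I would enlarge the local Hilbert space of the universal model of \cref{main-intro} by appending a ``control'' register, and take $h_3 = \sum_{\hat\phi} h_1(\hat\phi) \otimes \ketbra{\hat\phi} + h_{\mathrm{aux}}$, where the sum runs over a finite discrete set of control values and $h_{\mathrm{aux}}$ is a standard Gottesman--Irani / Bausch--Cubitt--Ozols style translationally-invariant history-state interaction for a reversible Turing machine $M$. On input the binary expansion of the chain length $N$, the machine $M$ writes the requisite approximation $\hat\phi(N)$ into the control register. The one-body term $h_2$ together with the scalars $\Delta$ and $T$ implement the energy penalties that, as in \cite{PiddockBausch} and in the proof of \cref{main-intro}, single out the correct computational branch and separate it from excited states.

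A first-order perturbation / Schrieffer--Wolff analysis then yields an effective Hamiltonian on the control-consistent low-energy subspace that agrees with $\Huniv(\hat\phi(N),\Delta,T)$ from \cref{main-intro}. Composing the two simulations, and invoking the fact that simulations compose \cite{Cubitt:2017}, gives universality of the new fixed-interaction family. The chain length $N$ must be large enough for $M$ to compute $\hat\phi$ to the precision required by \cref{main-intro}, which forces $N$ to grow exponentially in the bit-length of the target's description; this is precisely why the efficient system-size overhead of \cref{main-intro} is lost, even though the local Hilbert space dimension remains a universal constant.

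The main obstacle is ensuring that the phase-computation stage does not corrupt the subsequent universal-simulation dynamics: the control register must behave effectively as a classical parameter throughout the low-energy subspace. I would address this by inserting the ``idling to enhance coherence'' step of \cite{universality_of_adiabatic_computation, zhou:18} between the end of the phase-computation phase and the onset of the simulation dynamics inside the history state, and then verifying that the resulting combined Hamiltonian still satisfies the simulation conditions of \cite{Cubitt:2017} under the perturbative reduction. Carrying this out within a strictly translationally-invariant model with only $\Delta$ and $T$ as external knobs, without inadvertently introducing a third tunable coupling, is the technically delicate step.
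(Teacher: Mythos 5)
Your high-level idea---push the target's description into the chain length $N$ and have a Turing machine recover it---is indeed the route the paper takes, but the concrete construction you propose has a genuine flaw, and the paper's actual proof is more direct. The problematic step is the definition $h_3 = \sum_{\hat\phi} h_1(\hat\phi)\otimes\ketbra{\hat\phi} + h_{\mathrm{aux}}$ with the sum running over a \emph{finite} set of control values stored in an appended local register. In \cref{main-theorem-1} the parameter $\phi$ is not a constant-size knob: its binary expansion is the entire digital description of the target Hamiltonian from \cref{sec:digital-encoding}, of length $\poly(n,d^k,\log(\|H\|/\delta))$. To range over all targets on $n$ qudits the control register would need dimension $2^{\poly(n)}$, so $h_3$ as written is not a fixed two-body interaction on constant local dimension; distributing $\hat\phi$ over many sites instead destroys two-locality, and the ``controlled-$h_1$ plus Schrieffer--Wolff'' layer reintroduces exactly the chain of perturbative reductions the paper is trying to eliminate.

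The paper avoids reintroducing $\phi$ altogether. \cref{lem:dovetailed-qpe} is stated for \emph{any} QTM $M_1$ that writes the binary expansion of an integer $x$ onto the work tape it shares with $\MPE$; everything downstream ($\MPE$'s phase estimation on $\ee^{\ii H\tau}$, the flag states $\ket{\Omega_a},\ket{\Omega_b}$, the one-body penalty $T(\sqrt2\Pi_a-\Pi_b)$) only ever reads that tape. \cref{main-theorem-1} instantiates $M_1$ as QPE on the coupling $\phi$; the present theorem instead instantiates $M_1$ as the Gottesman--Irani binary-counter machine that writes out $N$ itself, choosing $B(N)$ as in \cref{eq:B(x)} to be the target's description, so that $N=\poly\bigl(2^{\poly(n,\|H\|,1/\eta,1/\epsilon)}\bigr)$ and the standard-form clock supplies the $\poly(N)$ runtime needed. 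No composition of two simulations and no extra effective-Hamiltonian analysis is required; the parameter scalings for $T$ and $\Delta$ are inherited verbatim from \cref{lem:dovetailed-qpe}. To salvage your route, replace the control register by ``$M_1$ writes the description onto the shared work tape that $\MPE$ reads''---at which point you have reproduced the paper's argument.
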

By varying the size of the chain $N$ that this Hamiltonian is acting on, and tuning the $\Delta$ and $T$ parameters in the construction, this Hamiltonian can replicate (again in the precise sense of~\cite{Cubitt:2017}) all quantum many body physics.
We are able to demonstrate that constructing a universal model with no free parameters is not possible, but the existence of a universal model with just one free parameter is left as an open question.

The remainder of the paper is set out as follows.
In \cref{preliminaries} we cover the necessary background regarding the theory of simulation, and encoding computation into ground states of \QMA-hard Hamiltonians.
In \cref{sec:overview} we give an overview of the new method for proving universality, and our two new universal constructions. 
Reading these sections should be enough to gain an intuitive understanding of our approach and our results.
The full proofs of our results are given in \cref{universality} - this section may be skipped on an initial reading if you are primarily interested in understanding the general approach, or the applications of the results.
The complexity theory implications are discussed in  \cref{sec:complexity}, while in \cref{holography-implications} the new toy model of holographic duality is constructed.
Avenues for future research, are discussed in \cref{discussion}.

\section{Preliminaries} \label{preliminaries}

\subsection{Universal Hamiltonians}
\subsubsection{Hamiltonian Encodings}
Any simulation of a Hamiltonian $H$ by another Hamiltonian $H'$ must involve ``encoding'' $H$ in $H'$ in some fashion.
In~\cite{Cubitt:2017} it was shown that any encoding map $\mathcal{E}(A)$ which satisfies three basic requirements
\begin{enumerate}[i)]
\item $\mathcal{E}(A) = \mathcal{E}(A)^\dagger$ for all $A \in \text{Herm}_n$
\item $\spec(\mathcal{E}(A)) = \spec(A)$ for all $A \in \text{Herm}_n$
\item $\mathcal{E}(pA + (1-p)B) = p \mathcal{E}(A) + (1-p)\mathcal{E}(B)$ for all $A, B \in \text{Herm}_n$ and all $p \in [0,1]$
\end{enumerate}
must be of the form
\begin{equation} \label{encodings}
\mathcal{E}(A) = V \left(A \otimes P + \overline{A} \otimes Q \right) V^\dagger,
\end{equation}
where $V$ is an isometry, $\overline{A}$ denotes complex conjugation, and $P$ and $Q$ are orthogonal projectors.
Moreover, it is shown that, under any encoding of the form given in \cref{encodings}, $\mathcal{E}(H)$ will also preserve the measurement outcomes, time evolution and partition function of $H$.

A \emph{local} encoding is an encoding which maps local observables to local observables, defined as follows.
\begin{definition}[Local subspace encoding (Definition 13 from~\cite{Cubitt:2017})]\label{def:local-encoding}
Let
\[
\mathcal{E}: \mathcal{B}\left(\otimes_{j=1}^n \mathcal{H}_j \right) \rightarrow \mathcal{B}\left(\otimes_{j=1}^{n} \mathcal{H}'_j \right)
\]
 be a subspace encoding. We say that the encoding is local if for any operator $A_j \in \Herm(\mathcal{H}_j)$ there exists $A'_j \in \Herm(\mathcal{H}'_j)$ such that:
\[
\mathcal{E}(A_j \otimes \identity) = (A'_j \otimes \identity)\mathcal{E}(\identity).
\]
\end{definition}

It is shown in~\cite{Cubitt:2017} that if an encoding $\mathcal{E}(M) = V(M \otimes P + \overline{M} \otimes Q){V}^\dagger$ is local, then the isometry $V$ can be decomposed into a tensor product of isometries $V = \otimes_i V_i$,  for isometries $V_i: \mathcal{H}_i \otimes E_i \rightarrow \mathcal{H}'_i$, for some ancilla system $E_i$.\footnote{There is a more general definition of simulation which doesn't require the isometries to be tensor product \cite{BH17}. However, these types of simulations don't preserve the local structure of the Hamiltonian. So while they are interesting from a complexity theoretic perspective, they are not as useful for physical simulation.}

In this paper all of the encodings we work with are of the simpler form $\mathcal{E}(A) = VAV^\dagger$.

\subsubsection{Hamiltonian Simulation}
Building on encodings, \cite{Cubitt:2017} developed a rigorous formalism of Hamiltonian simulation, formalizing the notion of one many-body system reproducing identical physics as another system, including the case of approximate simulation and simulations within a subspace.
We first describe the simpler special case of \emph{perfect} simulation.
If $H'$ perfectly simulates $H$, then it \emph{exactly} reproduces the physics of $H$ below some energy cutoff $\Delta$, where $\Delta$ can be chosen arbitrarily large.
For brevity, we abbreviate the low-energy subspace of an operator $A$ via $S_{\le\Delta(A)} \defas \spn\{ \ket{\psi} : A \ket{\psi} = \lambda\ket\psi \land \lambda \le \Delta \}$.
\begin{definition}[{Exact simulation,~\cite[Def.~20]{Cubitt:2017}}]\label{def:exact-sim}
We say that $H'$ perfectly simulates $H$ below the cutoff energy $\Delta$ if there is a local encoding $\mathcal{E}$ into the subspace $S_{\mathcal{E}}$ such that
  \begin{enumerate}[i.]
  \item $S_{\mathcal{E}} = S_{\leq\Delta(H')}$, and
  \item $H'|_{\leq \Delta} = \mathcal{E}(H)|_{S_\mathcal{E}}$.
  \end{enumerate}
\end{definition}

We can also consider the case where the simulation is only approximate:
\begin{definition}[{Approximate simulation,~\cite[Def.~23]{Cubitt:2017}}]\label{app-sim}
Let $\Delta,\eta,\epsilon>0$.
A Hamiltonian $H'$ is a $(\Delta, \eta, \epsilon)$-simulation of the Hamiltonian $H$ if there exists a local encoding $\mathcal{E}(M)=V(M \otimes P + \overline{M} \otimes Q){V}^\dagger$ such that
\begin{enumerate}[i.]
\item There exists an encoding $\tilde{\mathcal{E}}(M)=\tilde{V}(M \otimes P + \overline{M} \otimes Q)\tilde{V}^\dagger$ into the subspace $S_{\tilde{\mathcal E}}$ such that $S_{\tilde{\mathcal{E}}} = S_{\leq \Delta(H')}$ and $\|\tilde{V} - V\| \leq \eta$; and
\item $\|H'_{\leq \Delta} - \tilde{\mathcal{E}}(H)\| \leq \epsilon$.
\end{enumerate}
\end{definition}
\noindent Note that the role of $\tilde{\mathcal E}$ is to provide an \emph{exact} simulation as per \cref{def:exact-sim}. However, it might not always be possible to construct this encoding in a local fashion. The local encoding $\mathcal E$ in turn approximates $\tilde{\mathcal E}$, such that the subspaces mapped to by the two encodings deviate by at most $\eta$. $\epsilon$ controls how much the eigenvalues are allowed to differ.

If we are interested in whether an infinite family of Hamiltonians can be simulated by another, the notion of overhead becomes interesting: if the system size grows, how large is the overhead necessary for the simulation, in terms of the number of qudits, operator norm or computational resources?
We capture this notion in the following definition.
\begin{definition}[{Simulation,~\cite[Def.~23]{Cubitt:2017}}]\label{def:efficient-sim}
We say that a family $\mathcal{F}'$ of Hamiltonians can simulate a family $\mathcal{F}$ of Hamiltonians if, for any $H \in \mathcal{F}$ and any $\eta, \epsilon > 0$ and $\Delta \geq \Delta_0$ (for some $\Delta_0 > 0$), there exists $H' \in \mathcal{F}'$ such that $H'$ is a $(\Delta, \eta, \epsilon)$-simulation of $H$.

We say that the simulation is efficient if, in addition, for $H$ acting on $n$ qudits and $H'$ acting on $m$ qudits, $\|H'\| = \poly(n, 1 / \eta,1 / \epsilon,\Delta)$ and $m = \poly(n, 1 / \eta,1 / \epsilon,\Delta)$; $H'$ is efficiently computable given $H$, $\Delta$, $\eta$ and $\epsilon$; each local isometry $V_i$ in the decomposition of $V$ is itself a tensor product of isometries which map to $\BigO(1)$ qudits; and there is an efficiently constructable state $\ket{\psi}$ such that $P \ket{\psi} = \ket{\psi}$.
\end{definition}

As already outlined, in~\cite{Cubitt:2017} it is shown that approximate Hamiltonian simulation preserves important physical properties. We recollect the most important ones in the following.
\begin{lemma}[{\cite[Lem.~27, Prop.~28, Prop.~29]{Cubitt:2017}}] \label{physical-properties}
Let $H$ act on $(\field{C}^d)^{\otimes n}$.
  Let $H'$ act on $(\field{C}^{d'})^{\otimes m}$, such that $H'$ is a $(\Delta, \eta, \epsilon)$-simulation of $H$ with corresponding local encoding $\mathcal{E}(M) = V(M \otimes P + \overline{M} \otimes Q)V^\dagger$.
  Let $p = \rank(P)$ and $q = \rank(Q)$.
  Then the following holds true.
  \begin{enumerate}[i.]
  \item Denoting with $\lambda_i(H)$ (resp.\ $\lambda_i(H')$) the $i$\textsuperscript{th}-smallest eigenvalue of $H$ (resp.\ $H'$), then for all $1 \leq i \leq d^n$, and all $(i-1)(p+q) \leq j \leq i (p+q)$, $|\lambda_i(H) - \lambda_j(H')| \leq \epsilon$.
  \item The relative error in the partition function evaluated at $\beta$ satisfies
    \begin{equation}
      \frac{|\mathcal{Z}_{H'}(\beta) - (p+q)\mathcal{Z}_H(\beta) |}{(p+q)\mathcal{Z}_H(\beta)} \leq \frac{(d')^m \ee^{-\beta \Delta}}{(p+q)d^n \ee^{-\beta \|H\|}} + (\ee^{\epsilon \beta} - 1).
    \end{equation}
  \item For any density matrix $\rho'$ in the encoded subspace for which $\mathcal{E}(\identity)\rho' = \rho'$, we have
    \begin{equation}
      \|\ee^{-\ii H't}\rho'\ee^{\ii H't} - \ee^{-\ii \mathcal{E}(H)t}\rho'\ee^{\ii \mathcal{E}(H)t}\|_1 \leq 2\epsilon t + 4\eta.
    \end{equation}
  \end{enumerate}
\end{lemma}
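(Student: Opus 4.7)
Each of the three items is a consequence of the definition of approximate simulation, and my plan is to route each argument through the \emph{exact} subspace encoding $\tilde{\mathcal E}$ whose image is exactly $S_{\le \Delta}(H')$, and then transfer to the local encoding $\mathcal E$ using $\|V - \tilde V\| \le \eta$.

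For item (i), I first observe that, since $\tilde{\mathcal E}(H) = \tilde V(H\otimes P + \overline H \otimes Q)\tilde V^\dagger$ and $H$ and $\overline H$ share the same spectrum, each eigenvalue $\lambda_i(H)$ appears in $\spec(\tilde{\mathcal E}(H))$ with multiplicity exactly $(p+q)$ times its multiplicity in $H$. Because $H'|_{\le \Delta} = \tilde{\mathcal E}(H) + E$ with $\|E\|\le\epsilon$, Weyl's inequality gives $|\lambda_j(H'|_{\le\Delta}) - \lambda_j(\tilde{\mathcal E}(H))|\le\epsilon$ for every $j$, and the block of $(p+q)$ indices $(i-1)(p+q) \le j \le i(p+q)$ aligns with the $i$th eigenvalue of $H$. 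The remaining eigenvalues of $H'$ are all above $\Delta$, so they are irrelevant for this bound.

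For item (ii), I split the trace $\mathcal Z_{H'}(\beta) = \sum_k \ee^{-\beta\lambda_k(H')}$ into the $(p+q)d^n$ low-energy modes and the at most $(d')^m$ high-energy modes. The low-energy part differs multiplicatively from $(p+q)\mathcal Z_H(\beta)$ by at most a factor $\ee^{\epsilon\beta}$, by applying item (i) termwise, yielding the second summand $\ee^{\epsilon\beta}-1$. The high-energy part contributes at most $(d')^m\ee^{-\beta\Delta}$, and dividing by $(p+q)\mathcal Z_H(\beta) \ge (p+q)d^n \ee^{-\beta\|H\|}$ produces the first summand in the bound.

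For item (iii), I plan a two-step triangle argument. Since $\rho'$ is supported on the image of $\mathcal E(\1)$, which is within $\eta$ of the image of $\tilde{\mathcal E}(\1)$ by the first condition of \cref{app-sim}, I can replace evolution under $\mathcal E(H)$ by evolution under $\tilde{\mathcal E}(H)$ at a cost of $4\eta$ in trace distance: two factors of $\eta$ arise from conjugation on each side of $\rho'$, using $\|AXA^\dagger - BXB^\dagger\|_1 \le \|A-B\|\,\|X\|_1(\|A\|+\|B\|)$ with unitaries. On $S_{\le\Delta}$, the operators $H'$ and $\tilde{\mathcal E}(H)$ agree up to $\epsilon$ in operator norm, so the standard bound $\|\ee^{-\ii A t}\rho'\ee^{\ii A t} - \ee^{-\ii B t}\rho'\ee^{\ii B t}\|_1 \le 2t\|A-B\|$ provides the $2\epsilon t$ term. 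Combining the two estimates gives $2\epsilon t + 4\eta$.

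The main obstacle I expect is item (iii), and specifically the correct book-keeping of the four $\eta$-contributions: one has to argue that evolution by $H'$, whose action on the high-energy complement is not constrained, nevertheless approximately preserves the encoded subspace so that $\rho'$ does not leak mass out of it during the evolution. This is ultimately controlled by the closeness of $S_{\mathcal E}$ and $S_{\tilde{\mathcal E}} = S_{\le\Delta}(H')$, but making the chain of inequalities tight (rather than paying an extra $\epsilon t$ or $\eta t$ factor) requires careful use of $\|V - \tilde V\| \le \eta$ in the operator-norm, not trace-norm, sense.
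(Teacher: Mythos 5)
This lemma is imported verbatim from the cited reference (Lem.~27, Prop.~28, Prop.~29 of~\cite{Cubitt:2017}); the paper itself gives no proof, so the comparison is against the standard argument there. Your plan reproduces that argument correctly: Weyl perturbation applied to $H'|_{\le\Delta}=\tilde{\mathcal E}(H)+E$ for (i), the low/high-energy split of the trace for (ii), and the triangle inequality through $\tilde{\mathcal E}$ plus the Duhamel-type bound for (iii). The only inaccuracy is in the bookkeeping for (iii): the four factors of $\eta$ do not all come from replacing $\mathcal E(H)$ by $\tilde{\mathcal E}(H)$; rather, one first replaces $\rho'$ by $\tilde\rho=\tilde V V^\dagger\rho' V\tilde V^\dagger$ (cost $2\eta$ by unitary invariance of the trace norm), which sits exactly inside $S_{\le\Delta}(H')$ so that both $H'$ and $\tilde{\mathcal E}(H)$ preserve its support and the $2\epsilon t$ bound applies with no leakage, and then swaps $\tilde V\to V$ in the remaining two places (another $2\eta$) --- this resolves exactly the obstacle you flagged at the end.
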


\Cref{def:efficient-sim} naturally leads to the question in which cases a family of Hamiltonians is \emph{so} versatile that it can simulate any other Hamiltonian: in that case, we call the family \emph{universal}.
\begin{definition}[{Universal Hamiltonians~\cite[Def.~26]{Cubitt:2017}}]
We say that a family of Hamiltonians is a universal simulator---or simply is universal---if any (finite-dimensional) Hamiltonian can be simulated by a Hamiltonian from the family.
We say that the universal simulator is efficient if the simulation is efficient for all local Hamiltonians.
\end{definition}

\subsection{Circuit-to-Hamiltonian Mappings}\label{sec:c-to-ham}
The key idea behind our universal constructions is that it is possible to encode computation into the ground state of local Hamiltonians.
This technique was first proposed by Feynman in~\citeyear{Feynman1986}, and is the foundation for many prominent results in Hamiltonian complexity theory, such as \QMA-hardness of the local Hamiltonian problem~\cite{Feynman1986,Kitaev2002}. 

For the constructions we develop in this paper, we will make use of the ability to encode an arbitrary quantum computation into the ground state of a local Hamiltonian. These are often called ``circuit-to-Hamiltonian mappings'', though the mappings may involve other models of quantum computation than the circuit model.
These Hamiltonians are typically constructed in such a way that their ground states are ``computational history states''. A very general definition of history states was given in~\cite{CG18}; we will only require the simpler ``standard'' history states here.
\begin{definition}[Computational history state]\label{def:history-state}
A computational history state $\ket{\Phi}_{CQ} \in \mathcal{H}_C \otimes \mathcal{H}_Q$ is a state of the form
\[ \
\ket{\Phi}_{CQ} = \frac{1}{\sqrt{T}} \sum_{t=1}^{T} \ket{\psi_t}\ket{t},
\]
where $\{\ket{t}\}$ is an orthonormal basis for $\mathcal{H}_C$ and $\ket{\psi_t} = \Pi_{i=1}^tU_i\ket{\psi_0}$ for some initial state $\ket{\psi_0}\in \mathcal{H}_Q$ and set of unitaries $U_i \in \mathcal{B}(\mathcal{H}_Q)$.

$\mathcal{H}_C$ is called the clock register and $\mathcal{H}_Q$ is called the computational register.
If $U_t$ is the unitary transformation corresponding to the $t$\textsuperscript{th} step of a quantum computation, then $\ket{\psi_t}$ is the state of the computation after $t$ steps.
We say that the history state $\ket{\Phi}_{CQ}$ encodes the evolution of the quantum computation.
\end{definition}
Note that $U_t$ need not necessarily be a gate in the quantum circuit model. It could also e.g.\ be one time-step of a quantum Turing machine, or even a time-step in some more exotic model of quantum computation~\cite{Bausch2016}, or an isometry~\cite{Usher2017}.
In the particular constructions we make use of in this work, $U_t$ will be a time-step of a quantum Turing machine.

\section{Overview of construction} \label{sec:overview}

\subsection{High-level outline of the construction} \label{sec:outline}

As mentioned in \cref{sec:c-to-ham}, the key technique we make use of in our universality constructions is the ability to encode computations into the ground states of local Hamiltonians.
The model of computation we encode is the quantum Turing machine (QTM) model - standard techniques for encoding QTMs in local Hamiltonians give translationally invariant Hamiltonians \cite{spec-gap,Gottesman2009}.

In both the constructions we develop in this work a description of the Hamiltonian to be simulated (the ``target'' Hamiltonian, $\Htarget$) is encoded in the binary expansion of some natural number, $x \in \field N$ .
Details of this encoding are given in \cref{sec:digital-encoding}.
The natural number $x$ is then itself encoded in some parameter of the universal Hamiltonian (see \cref{sec:ti-c-to-ham} for two methods of encoding natural numbers in parameters of universal Hamiltonians). 

The Hamiltonian we use to construct the universal model has as its ground state computational history states (cf \cref{def:history-state}) which encode two QTMs ($M_1$ and $\MPE$) which share a work tape.
The two computations are `dovetailed' together - the computation $M_1$ occurs first, and the result of this computation is used as input for $\MPE$.
The first QTM, $M_1$, extracts the binary expansion of $x$ from the parameter of the Hamiltonian.
At the end of $M_1$'s computation, the binary expansion of $x$ is written on the work tape which $M_1$ shares with $\MPE$.
An outline of the methods we use to extract $x$ and write it on the Turing machine tape are given in \cref{sec:ti-c-to-ham}.

The second QTM, $\MPE$ reads in $x$, which contains a description of $\Htarget$, from the work tape which it shares with $M_1$.
It also reads in an input state $\ket{\psi}$ - this is unconstrained by the computation (it can be thought of as carrying out the same role as a witness in a \QMA verification circuit).
It then carries out phase estimation on $\ket{\psi}$ with respect to the unitary generated by $\Htarget$.

The Hamiltonian which encodes $M_1$ and $\MPE$ has a zero-energy degenerate ground space, spanned by history states with all possible input states $\ket{\psi}$.
In order to recreate the spectrum of $\Htarget$ we need to break this degeneracy.
We achieve this by adding one body projectors to the universal Hamiltonian which give the correct energy to the output of $\MPE$ to reconstruct the spectrum of $\Htarget$.

With this construction the energy levels of the universal Hamiltonian recreate the energy levels of $\Htarget$.
To ensure that the eigenstates are also correctly simulated, before $M_1$ carries out its computation, it `idles' in its initial state for some time $L$.
By choosing $L$ large enough, we show that this construction can approximately simulate any target Hamiltonian.
A more detailed sketch of how we use idling and phase estimation to achieve simulation is given in \cref{sec:dovetailing}, while rigorous proofs are given in \cref{universality}.

\subsection{A Digital Representation of a Local Hamiltonian}\label{sec:digital-encoding}
\newcommand{\ivar}{\eta}
As discussed in \cref{sec:outline}, we need to encode a description of the target Hamiltonian $\Htarget$ in some parameter of the universal Hamiltonian.
In \cref{sec:ti-c-to-ham} we outline the two methods we use to encode a natural number in the parameter of a Hamiltonian.
But how do we represent $\Htarget=\sum_{i=1}^m h_i$ in the binary expansion of a natural number $x\in\field N$, irrespective of its origin?

We will assume that $\Htarget$ is a $k$-local Hamiltonian, acting on $n$ spins of local dimension $d$.
We emphasize that $k$ can be taken to be $n$, i.e.\ the system size---and therefore we can simulate \emph{any} Hamiltonian, not just local ones. 
However we keep track of the locality parameter $k$ as it is relevant when deriving the overhead of our simulations.

Every value needed to specify the $k$-local simulated system $\Htarget$ will be represented in Elias-$\gamma'$ coding, which is a simple self-delimiting binary code which can encode all natural numbers~\cite{Fenwick:2003,kohler:18}.
For the purpose of the encoding, we will label the $n$ spins in the system to be simulated by integers $i = 1,\ldots,n$.

The encoding of $\Htarget$ begins with the three meta-parameters $n$ (spin count), followed by $k$ (locality), and then $m$ (number of $k$-local terms).
Each of the $m$ $k$-local terms in $H$ is then specified by giving the label of the spins involved in that interaction, followed by a description of each term of the $d^k \times d^k$ Hermitian matrix describing that interaction.
Each such matrix entry is specified by giving two integers $a$ and $b$. The matrix entry can be recovered by calculating $a \sqrt{2} - b$, which is accurate up to a small error.%
\footnote{Note that by Weyl's equidistribution theory $\sqrt{2}a \mod 1$ uniformly covers $[0,1]$; the set $\mathcal{T} = \{a\sqrt{2}-b\mid a, b \in \field{Z}^+\}$ is dense in $\field{R}$.
}

Specifying $\Htarget$ to accuracy $\delta$ requires each such matrix entry to be specified to accuracy $\delta/(md^{2k})$. Therefore the length of the description of $\Htarget$ is
\begin{equation}
md^{2k}\log\left(\| \Htarget \| md^{2k}/\delta\right) =\poly\left(n,d^k,\log(\|H\|/\delta) \right)
\end{equation}

Finally, the remaining digits of $x$ specify $\Xi$---the bit precision to with which the phase estimation algorithm should calculate the energies (i.e.\ we require QPE to extract $\Xi$ binary digits), and $L$---the length of time the system should ``idle'' in its initial state before beginning its computation.

So, the binary expansion $B(x)$ of $x$ has the following form:
\begin{equation}\label{eq:B(x)}
B(x) \defas \gamma'(n) \cdot \gamma'(k) \cdot \gamma'(m) \cdot \left[ \gamma'(i)^{\cdot k} \cdot \left(\gamma'(a_j) \cdot \gamma'(b_j) \right)^{4^k}\right]^{\cdot m} \cdot \gamma'(\Xi) \cdot \gamma'(L).
\end{equation}
Here $\gamma'(n)$ denotes $n$ in Elias-$\gamma'$ coding, and $\cdot$ denotes concatenation of bit strings.

With regards to the identification of a real number $n=\sqrt 2 a - b$, we observe that it is clearly straightforward to recover $n$ from $a$ and $b$ (by performing basic arithmetic). The other direction works as follows.
\begin{remark}\label{rem:2a-b}
Let $n\in\field N$, and let $\Xi\in\field N$ denote a precision parameter. Then we can find numbers $a,b\in\field N$ such that
\[
    \left| n - \sqrt2a + b\right| \le 2^{-\Xi},
\]
and the algorithm runs in $\BigO(\poly(\Xi, \log_2 n))$.
\end{remark}
\begin{proof}
We solve $2^\Xi n = \lfloor 2^\Xi\sqrt 2 \rfloor a - 2^\Xi b$ as a linear Diophantine equation in the variables $a$ and $b$, with largest coefficient $\BigO(2^\Xi n)$.
This can be done in polynomial time in the bit precision of the largest coefficient, for instance by using the extended Euclidean algorithm~\cite{Fox2000}.
\end{proof}

In \cref{universality}, we describe a construction to $(\Delta',\eta,\epsilon')$-simulate the Hamiltonian described by $x$, but note that this will only give a $(\Delta',\eta,\epsilon'+\delta)$-simulation of the actual target Hamiltonian $\Htarget$.

\subsection{Encoding the target Hamiltonian in parameters of the simulator Hamiltonian} \label{sec:ti-c-to-ham}

In \cref{sec:digital-encoding} we described how we encode the information about the Hamiltonian we want to simulate, $\Htarget$ in a natural number $x$. 
Now we require a method to encode $x$ in some parameter of the universal Hamiltonian, and a method to write its binary expansion on the Turing machine tape shared by $M_1$ and $\MPE$.
We develop two constructions, building on the mappings in~\cite{spec-gap} and~\cite{Gottesman2009}.
The first construction is efficient in terms of the number of spins in the simulator system, while the second construction is not efficient, but requires less parameters to specify the universal model.
In both cases the computation encoded in the ground state of the Hamiltonian is a QTM, and the mapping from a QTM to the Hamiltonian gives a translationally invariant Hamiltonian.

\subsubsection{Encoding the target Hamiltonian in a phase of the simulator Hamiltonian}\label{sec:spec-gap}

First we consider the construction building on the work in \cite{spec-gap}.
Here, we encode the natural number $x \in \field N$ in a phase $\phi = x/2^{\lceil \log_2 x \rceil}$ of the Hamiltonian.

The Hamiltonian for this construction is given by $H = \sum_{i=1}^N h^{(i,i+1)}$ where $N$ is the number of spins in the simulator system, and $h$ is a two-body interaction of the form \cite[Theorem 32]{spec-gap-full}:
\begin{equation}
h = A + (e^{i\pi \phi} B + e^{i \pi 2^{-|\phi|}}C + \mathrm{h.c.})
\end{equation}
where $A$ is a fixed Hermitian matrix and $B,C$ are fixed non-Hermitian matrices.
For a detailed construction of the terms in the Hamiltonian we refer the interested reader to \cite[Section 4]{spec-gap-full}. 

The circuit-to-Hamiltonian map encodes two Turing machine computations ``dovetailed'' together, where the two Turing machines share a work tape.
The first computation is a phase estimation algorithm.
It extracts the phase $\phi$ from the Hamiltonian, and writes its binary expansion onto the work tape.  
The second computation will be outlined in \cref{sec:dovetailing}.

In order to extract $a$ digits from a phase $\phi=0.\phi_1\phi_2\cdots\phi_a\phi_{a+1}\cdots$, we require a runtime of $2^a$.
In our case, we have $a = |x| = \poly\left(n,d^k,\log(\|H\|/\delta) \right)$, where $|x|$ denotes the number of digits in the binary expansion of $x$.
As our computation is encoded as a computational history state, this in turn means that the spectral gap of the history state Hamiltonian necessarily closes as $\BigO(2^{-\poly\left(n,d^k,\log(\|H\|/\delta) \right)})$~\cite{Bausch2016a,crosson-bowen,CG18}.
This scaling of the spectral gap means that the universal model constructed via this method is not efficient in terms of the norm of the simulator system (see \cref{main-theorem-1} for full discussion of the scaling).

However, it is important to note that using the construction from \cite{spec-gap} it is possible to encode a computation with exponential runtime into a Hamiltonian on polynomially many spins.
Details of the construction are given in \cite[Section 4.5]{spec-gap-full} (in particular the relevant scaling is discussed on \cite[Page 81]{spec-gap-full}).
We will not give the details of the construction here, but note that it encodes a Turing machine which runs for $\BigO(N\exp(N))$ time steps in a Hamiltonian acting on $N$ spins \cite[Proposition 45]{spec-gap-full}.
Therefore, the universal model constructed via this method is efficient in terms of the number of spins in the simulator system.

\subsubsection{Encoding the target Hamiltonian in the size of the simulator system}

Our second construction builds on the mapping in \cite{Gottesman2009}.
Here, we encode the description of the $\Htarget$ into the binary expansion of $N$ - the number of spins the universal Hamiltonian is acting on.
 
The circuit-to-Hamiltonian map encodes two Turing machine computations ``dovetailed'' together, where again the two Turing machines share a work tape.
The first Turing machine is a binary counter Turing machine.
After it has finished running, the binary expansion of $N$ is written on the Turing machine's work tape.
In our construction, the binary expansion of $N$ contains the description of $\Htarget$.
We will discuss the second computation in \cref{sec:dovetailing}.

The binary counter QTM takes time $N$ to write out the binary expansion of $N$ on its work tape.
Since $\Htarget$ is encoded in the binary expansion of $N$, this run time, as well as the size of the simulator system is exponential in the size of the target system. 
Moreover, since the runtime is exponential in the size of the target system, the spectral gap of the universal Hamiltonian closes exponentially fast.
Therefore, the universal model constructed via this method is not efficient in terms of number of spins or the norm of the simulator system.
See \cref{main-theorem-2} for a full discussion of the scaling of this universal model.

In this case the interactions of the Hamiltonian are entirely fixed - they enforce that the ground state of the Hamiltonian is a history state encoding a QTM computation (for a detailed construction of the terms in the Hamiltonian we refer readers to \cite{Gottesman2009}.
There are two additional global parameters in the Hamiltonian which depend on the accuracy of the simulation - we defer discussion of those parameters to the technical proofs of \cref{lem:dovetailed-qpe} and \cref{main-theorem-2}.
All the information about the target Hamiltonian (the Hamiltonian to be simulated) is entirely encoded in the binary expansion of $N$ - the number of spins in the simulator system.

\subsection{Dovetailing for simulation}\label{sec:dovetailing}

After the computation carried out by $M_1$ has finished, the binary expansion of $x$ is written out on the work-tape shared by $M_1$ and $\MPE$. 
We then construct (using standard techniques from~\cite{spec-gap,Gottesman2009}) a Hamiltonian such that the two Turing machines $M_1$ and $\MPE$ share a work tape.
At the beginning of its computation, $\MPE$ reads in a description of the target Hamiltonian $H$ that we wish to simulate.
$\MPE$ then carries out phase estimation on some input state $\ket{\psi}$ (left unconstrained, just like a \QMA witness)\footnote{Although quantum phase estimation takes as input an eigenvector of the unitary, we show in the proof that this suffices, as the argument then extends to general input states by linearity.} with respect to the unitary generated by the target Hamiltonian, $U = \ee^{\ii H\tau}$ for some $\tau$ such that $\| H\tau\| < 2 \pi$.
It then outputs the eigenphase $\phi$ in terms of a pair of natural numbers $(a,b)$ such that $\phi=a\sqrt{2}-b$ (which can be done efficiently via \cref{rem:2a-b}).

The ground space of the Hamiltonian which encodes the computation of $M_1$ and $\MPE$ has zero energy, and is spanned by history states in a superposition over all possible initial states $\ket\psi$.
In general the Hamiltonian we want to simulate doesn't have a highly degenerate zero energy ground state, so we need to break this degeneracy and construct the correct spectrum for $\Htarget$.
In order to break the degeneracy and reconstruct the spectrum of $\Htarget$, we add one body projectors to the universal Hamiltonian, which are tailored such that the QPE output $(a,b)$ identifies the correct energy penalty to inflict.

In order to ensure that the encoding of $\Htarget$ in the universal Hamiltonian is local, we make use of an idea originally from~\cite{universality_of_adiabatic_computation} and used recently in \cite{zhou:18,nirkhe:18,Aharonov_2014}, which has bee called `idling to enhance coherence'.
Before carrying out the phase-estimation computation, the system ``idles'' in its initial state for time $L$.
By choosing $L$ appropriately large, we can ensure that with high probability the input spins (the spins which form the unconstrained input $\ket{\psi}$ to $\MPE$) are found in their initial states. 
This means that (with high probability) there are a subset of spins on the simulator system whose state directly maps to the state which is being simulated in the target system.
This ensures that the encoding is (approximately) local (see \cref{lem:dovetailed-qpe} for detailed analysis of how idling is used to achieve universality).

\section{Universality} \label{universality}

\subsection{Translationally-Invariant Universal Models in 1D}
In this section we prove our main result: there exist translationally invariant, nearest neighbour Hamiltonians acting on a chain of qudits, which are universal quantum simulators.

All the `circuit-to-Hamiltonian' mappings we make use of in this work are what are known as ``Standard form Hamiltonians''.
Where ``Standard form Hamiltonians'' are a certain class of circuit-to-Hamiltonian constructions, defined in \cite{watson:19}.
We refer interested readers to \cite{watson:19} for the full definition - and simply note that it encompasses the Turing-machine based mappings which we make use of in this work \cite{spec-gap,Gottesman2009}.
In \cite{watson:19}, the following result was shown, which we will make use of in our proofs:
\begin{lemma}[{Standard form ground states; restatement of ~\cite[Lem.~5.8, Lem.~5.10]{watson:19}}] \label{james-lemma}
Let $H_\mathrm{SF}$ be a Standard Form Hamiltonian encoding a computation $U$, which takes (classical) inputs from a Hilbert space $\mathcal{S}$, and which sets an output flag with certainty if it is given an invalid input.
For $\ket{\psi_\mu} \in \mathcal{S}$ and $\Pi_{t=1}^T U_t = U$ we define
\[
\ket{\Phi(U,\psi_\mu)} \defas \frac{1}{\sqrt{T}} \sum_{t=1}^{T} U_t\ldots U_1 \ket{\psi_\mu}\ket{t}.
\]
Then $\mathcal{L} = \text{\emph{span}}\{ \ket{\Phi(U,\psi_\mu)} \}_{\mu=1}^{d^n}$ defines the kernel of $H_{SF}$, i.e.\ $H_\mathrm{SF}|_{\mathcal{L}} = 0$.
The\ smallest non-zero eigenvalue of $H_\mathrm{SF}$ scales as $1 - \cos{\pi / 2T}$.
\end{lemma}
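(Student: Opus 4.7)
The plan is to work from the generic decomposition of a standard form Hamiltonian into $H_\mathrm{SF} = H_{\mathrm{prop}} + H_{\mathrm{in}} + H_{\mathrm{clock}} + H_{\mathrm{pen}}$, where $H_{\mathrm{prop}}$ is the Feynman--Kitaev propagation term that couples consecutive time slices via the gates $U_t$, $H_{\mathrm{in}}$ pins the input register to $\mathcal{S}$ at time $t=1$, $H_{\mathrm{clock}}$ projects onto the valid clock subspace, and $H_{\mathrm{pen}}$ penalizes the ``output flag set'' configuration on the final clock slice. With this decomposition the lemma splits into two claims: (i) $\ker H_\mathrm{SF} = \mathcal{L}$, and (ii) the next smallest eigenvalue is at least $1-\cos(\pi/2T)$.

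For (i), the forward containment $\mathcal{L}\subseteq \ker H_\mathrm{SF}$ is a direct check on each summand. On a history state, $H_{\mathrm{prop}}$ vanishes because the superposition is exactly the one that the $U_t$ propagate; $H_{\mathrm{in}}$ vanishes because $\ket{\psi_\mu}\in\mathcal{S}$ by assumption; and $H_{\mathrm{clock}}$ together with $H_{\mathrm{pen}}$ vanishes because the clock register is supported on valid configurations and the output flag remains unset (the hypothesis is that the computation sets the flag \emph{only} on invalid inputs). For the reverse containment, the key tool is Kitaev's change of basis $W\colon \ket{\phi}\ket{t}\mapsto (U_t\cdots U_1\ket{\phi})\otimes\ket{t}$, which conjugates $H_{\mathrm{prop}}$ into $\1_Q\otimes L$, where $L$ is the open-path Laplacian on the clock. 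Since $W$ acts trivially at $t=1$, it commutes with $H_{\mathrm{in}}$, and the rotated Hamiltonian block-diagonalizes into a direct sum indexed by computational basis states $\ket{\phi}$; each block is the Laplacian $L$ plus, in the orthogonal complement of $\mathcal{S}$, a strictly positive penalty. The zero-energy vectors are therefore exactly the uniform clock superpositions tensored with $\ket{\psi_\mu}\in\mathcal{S}$, whose $W$-preimages are the history states in $\mathcal{L}$.

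The main technical obstacle is to pin down precisely where the ``invalid-input-sets-flag'' hypothesis enters. On an invalid input $\ket{\phi}\notin\mathcal{S}$ one still naively has a Laplacian zero mode (the uniform clock superposition), so something must rule it out. The hypothesis does so via $H_{\mathrm{pen}}$: the guaranteed flag-setting at the final time step causes $H_{\mathrm{pen}}$ to have non-trivial overlap with the final time slice of that zero mode, turning the block into a Laplacian with an effective Dirichlet-type boundary condition. Quantifying this cleanly — showing that even a single penalized time slice in a uniform clock superposition of length $T$ yields an energy $\Omega(1/T)$ that persists when combined with the Laplacian, rather than being washed out — is the delicate calculation that also drives the gap bound.

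For the spectral gap (ii), on each invariant block the Hamiltonian reduces to the path Laplacian $L$ with an additional Dirichlet penalty at one endpoint coming from $H_{\mathrm{in}}$ (on the invalid-input blocks) or $H_{\mathrm{pen}}$. Standard trigonometric diagonalization of this tridiagonal matrix gives eigenvalues of the form $1-\cos\bigl(\pi(2k+1)/(2T)\bigr)$ for $k=0,1,\dots,T-1$, whose minimum is $1-\cos(\pi/(2T))$. Combining this with the $\Omega(1)$ gap of $H_{\mathrm{clock}}$ on the invalid-clock subspace via a Nullspace Projection Lemma argument (to control cross-terms between the summands) yields the claimed lower bound on the smallest non-zero eigenvalue of $H_\mathrm{SF}$.
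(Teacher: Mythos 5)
The paper never proves this lemma---it is imported verbatim from the cited source (Watson, Lemmas 5.8 and 5.10)---so there is no in-paper argument to compare against; your reconstruction (the propagation/input/clock/penalty decomposition, Kitaev's rotation $W$ conjugating $H_{\mathrm{prop}}$ into $\1\otimes L$, block-diagonalisation over valid versus invalid inputs, and a Dirichlet-penalised path Laplacian on the invalid blocks yielding the $1-\cos(\pi/2T)$ gap) is precisely the standard argument underlying those cited lemmas, and it is essentially correct. The only point worth making explicit is the hypothesis you use implicitly: the flag must \emph{never} be set on valid inputs (otherwise $\mathcal{L}\not\subseteq\ker H_{\mathrm{SF}}$), and combined with ``set with certainty on invalid inputs'' this is exactly what forces $U^\dagger\Pi_{\mathrm{flag}}U=\Pi_{\mathcal{S}^\perp}$ and hence makes your block decomposition exact rather than merely approximate.
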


We also require a digital quantum simulation algorithm, summarized in the following lemm:
\begin{lemma}[Implementing a Local Hamiltonian Unitary]\label{lem:sparse-sim}
	For a $k$-local Hamiltonian $H=\sum_{i=1}^m h_i$ on an $n$-partite Hilbert space of local dimension $d$, and where $m=\poly n$, there exists a QTM that implements a unitary $\tilde U$ such that
	\[
		\tilde U = \ee^{\ii H t} + \BigO(\epsilon),
	\]
	and which requires time $\poly(1/\epsilon, d^k, \| H \| t, n)$.
\end{lemma}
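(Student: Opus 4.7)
The plan is to chain together three standard ingredients: a Trotter--Suzuki decomposition of $\ee^{\ii Ht}$, a gate-level implementation of each local factor via a universal gate set (with Solovay--Kitaev to handle the continuous parameters), and a universal QTM to execute the resulting circuit. Throughout, I read the description of $H$ (in the Elias-$\gamma'$ format of \cref{sec:digital-encoding}) off the work tape; recovering each matrix entry $a\sqrt 2 - b$ of an individual $h_j$ to accuracy $\epsilon'$ can be done in time $\poly(\log(1/\epsilon'), \log\|H\|, n, d^k)$ using basic arithmetic together with \cref{rem:2a-b}.

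First I would apply a $p$-th order Suzuki product formula to the splitting $H = \sum_{j=1}^m h_j$. Using $r$ Trotter steps over the interval $t$, the error is bounded by $C_p (\|H\|t)^{p+1}/r^p$ for some absolute constant $C_p$, so choosing $r = \Theta\!\bigl((\|H\|t)^{1+1/p}\epsilon^{-1/p}\bigr)$ forces the Trotter error to be at most $\epsilon/3$. Each step produces a product of at most $5^{p-1}m$ factors of the form $\ee^{\ii h_j s}$ with $|s|\le t/r$, each acting on at most $k$ qudits, i.e.\ on a $d^k$-dimensional subspace. The total number of such local exponentials is $N_{\mathrm{tot}} = \BigO(m \cdot (\|H\|t)^{1+1/p}\epsilon^{-1/p})$.

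Next I would compile each $\ee^{\ii h_j s}$ into gates from a fixed finite universal set for qudits. Since each factor is a unitary on a constant-sized ($d^k$-dimensional) register, classical preprocessing computes the $d^k \times d^k$ matrix to sufficient precision, then decomposes it into $\BigO(d^{2k})$ two-level (KAK/Givens-type) unitaries. Solovay--Kitaev then approximates each two-level rotation to error $\epsilon'' = \epsilon/(3N_{\mathrm{tot}}d^{2k})$ with $\polylog(1/\epsilon'')$ gates from the universal set. Summing approximation errors (subadditive for unitaries in operator norm) bounds the total gate-synthesis error by $\epsilon/3$. The total gate count is
\begin{equation}
  G \;=\; \BigO\!\bigl( N_{\mathrm{tot}} \cdot d^{2k} \cdot \polylog(N_{\mathrm{tot}} d^{2k}/\epsilon) \bigr) \;=\; \poly(1/\epsilon,\, d^k,\, \|H\|t,\, n).
\end{equation}

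Finally I would invoke a standard universal QTM simulation of quantum circuits (e.g.\ Yao's construction, as refined by Bernstein--Vazirani) which runs a size-$G$ circuit with polynomial overhead in $G$ and an additional error $\epsilon/3$ absorbable into the above budget. The resulting QTM reads the description of $H$, computes the Trotter schedule and SK gate sequences classically on its work tape, and applies them to the quantum register, yielding a unitary $\tilde U$ with $\|\tilde U - \ee^{\ii Ht}\| \le \epsilon$ in time $\poly(1/\epsilon, d^k, \|H\|t, n)$ as required. The main obstacle is the bookkeeping of the three separate error sources (Trotter truncation, Solovay--Kitaev compilation, and circuit-to-QTM simulation) so that each accumulates to at most $\epsilon/3$; everything else is routine application of known tools.
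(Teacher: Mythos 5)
Your proposal is correct and matches the paper's approach: the paper's proof is simply a citation to Lloyd's Trotterization result and Berry et al., and your argument (Trotter--Suzuki splitting, Solovay--Kitaev compilation, universal-QTM execution with a three-way error budget) is precisely the standard content behind that citation. Your version is more detailed than the paper's one-line proof but introduces no new ideas or gaps.
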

\begin{proof}
	Follows directly from \cite{Lloyd1996,Berry2005}.
\end{proof}
The polynomial time bound in \cref{lem:sparse-sim} suffices for our purposes; a tighter (and more complicated) bound, also for the more general case of sparse Hamiltonians, can be found in \cite{Berry2015}.

We can now start our main analysis by proving that ``dovetailing'' quantum computations---rigorously defined and constructed in~\cite[Lem.~22]{spec-gap}---can be used to construct universal simulators.
\begin{lemma}[Dovetailing for simulation]\label{lem:dovetailed-qpe}
Let $M_1$ be a QTM which writes out the binary expansion of some $x \in \field{N}$ on its work tape.
Assume there exists a standard form Hamiltonian which encodes the Turing machine $M_1$.
Then there also exists a standard form Hamiltonian $\HSF(x)$, which encodes the computation $M_1$ dovetailed with a QTM $\MPE$, such that the family of Hamiltonians
\begin{equation}
\Huniv(x) = \Delta \HSF(x) + T \sum_{i=0}^{N-1} \left(\sqrt{2} \Pi_{\alpha} - \Pi_{\beta} \right)
\end{equation}
can simulate any quantum Hamiltonian.
Here $\Delta$ and $T$ are parameters of the model, and $\Pi_{\alpha}$ and $\Pi_{\beta}$ are one-body projectors,
\end{lemma}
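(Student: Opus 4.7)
The plan is to apply the dovetailing construction of~\cite[Lem.~22]{spec-gap} to the pair $(M_1,\MPE)$, obtaining a single composite QTM $M$ whose evolution performs the full encoding-to-eigenvalue pipeline, and then to invoke the standard-form machinery of~\cite{watson:19} to turn $M$ into $\HSF(x)$.

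First I would define $\MPE$ broadly enough to include: (i) parsing $B(x)$ to reconstruct the target Hamiltonian $\Htarget$ (applying \cref{rem:2a-b} to each matrix entry), (ii) copying an input witness register into a dedicated idling region and idling there for $L$ steps, (iii) running phase estimation on $\Htarget$ with the witness to extract an eigenvalue $\lambda_\mu$ to precision $2^{-\Xi}$, and (iv) using the Diophantine solver of \cref{rem:2a-b} to produce integers $a_\mu,b_\mu$ with $|\lambda_\mu-(\sqrt{2}\,a_\mu-b_\mu)|\le 2^{-\Xi}$, writing $a_\mu$ copies of a marker symbol $\ket{\alpha}$ and $b_\mu$ copies of $\ket{\beta}$ on the output tape and filling the rest with a neutral symbol orthogonal to both. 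Dovetailing with $M_1$ yields $\HSF(x)$ encoding $M$; by \cref{james-lemma}, $\HSF(x)\ge 0$ with kernel $\mathcal{L}=\spn\{\ket{\Phi(U,\psi_\mu)}\}$ indexed by computational-basis witnesses, and first excited gap scaling as $1-\cos(\pi/2T)$.

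Next I would set the one-body projectors to $\Pi_\alpha=\ketbra{\alpha}$ and $\Pi_\beta=\ketbra{\beta}$ acting on a single chain site. Since these commute with the clock register, a direct computation on each history state gives
\begin{equation*}
\bra{\Phi(U,\psi_\mu)}\sum_{i=0}^{N-1}\bigl(\sqrt{2}\,\Pi_\alpha-\Pi_\beta\bigr)\ket{\Phi(U,\psi_\mu)} \;=\; \frac{1}{T}\sum_{t=1}^{T}\bigl(\sqrt{2}\,\#_\alpha(t)-\#_\beta(t)\bigr),
\end{equation*}
where $\#_\alpha(t),\#_\beta(t)$ count the markers on the tape at computational step $t$. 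By padding $M$ with a final idle phase of duration $fT$ for a universal constant $f$ independent of $\mu$ (achieved by forcing every branch of $\MPE$ to halt after the same number of steps via a dummy counter), this collapses to $f(\sqrt{2}\,a_\mu-b_\mu)=f(\lambda_\mu\pm 2^{-\Xi})$. Rescaling $T\to T/f$ makes the restriction of $\Huniv(x)$ to $\mathcal{L}$ act as $T\cdot\diag(\lambda_\mu)$, up to an error of at most $T\cdot 2^{-\Xi}+T\delta$ (with $\delta$ the matrix-coefficient rounding from \cref{sec:digital-encoding}), both controllable by the precision parameters encoded in $B(x)$.

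Finally I would verify the $(\Delta',\eta,\epsilon)$-simulation conditions of \cref{app-sim} with encoding $\mathcal{E}(\ket{\psi_\mu})=\ket{\Phi(U,\psi_\mu)}$. Locality follows from the idling trick: during the length-$L$ window preceding phase estimation the witness register holds $\ket{\psi_\mu}$ verbatim, so the history state restricted to those time steps has weight $L/T$, and choosing $L$ to be a sufficiently large fraction of $T$ forces the exact (but in general nonlocal) isometry $\tilde V$ to be $\eta$-close to a tensor-product isometry $V$ supported on the idle cells. The cutoff $\Delta'$ is reached by scaling $\Delta$ so that $(1-\cos(\pi/2T))\Delta$ dominates every other energy scale. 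The main obstacle I expect is precisely this locality requirement: the simulation definition demands a tensor-product decomposition of the encoding, which in turn forces $L$ to dominate the total runtime and inflates $\|\HSF\|$—this is the source of the ``efficient in spin count, not in norm'' trade-off advertised in \cref{main-intro}. A secondary subtlety is guaranteeing input-independence of the padding fraction $f$, which requires carefully equalising the runtime across all branches of $\MPE$.
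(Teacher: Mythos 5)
Your overall pipeline matches the paper's: dovetail $M_1$ with a QPE machine that parses the digital description of $H$, idles for $L$ steps to make the encoding local, outputs the eigenphase as a pair $(a_\mu,b_\mu)$ with $\phi_\mu=\sqrt2\,a_\mu-b_\mu$, and uses the translationally-invariant one-body terms weighted by $\sqrt2$ and $-1$ to imprint that phase as an energy shift on each history state. Two points deserve repair. First, your marker mechanism leaves $a_\mu$ \emph{persistent} copies of $\ket{\alpha}$ on the tape and then relies on a terminal idle phase occupying a fraction $f$ of the runtime, so the expectation of $\sum_i(\sqrt2\,\Pi_\alpha-\Pi_\beta)_i$ in the history state is $f\phi_\mu$ only up to corrections of order $a_\mu^2/T$ accumulated while the markers are being written (and $a_\mu$ scales like $2^{\Xi}$); moreover $f$ competes directly with the requirement $L/T\to1$ needed for locality, and your normalisation slips---with coefficient $T/f$ the restriction to the kernel acts as $T\diag(\lambda_\mu)$, as you write, which is $T$ times too large to reproduce $\spec(H)$. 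The paper instead has a dovetailed counter decrement $a_\mu,\dots,0$ and $b_\mu,\dots,0$, with a flag state appearing on exactly one spin at each counter step and nowhere else, so that exactly $a_\mu$ (resp.\ $b_\mu$) of the $T$ time steps carry a flag; the expectation is then exactly $(\sqrt2\,a_\mu-b_\mu)/T$, and taking the coefficient $T$ equal to the total runtime recovers $\phi_\mu$ with no padding fraction and no quadratic correction.

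Second, and more importantly, verifying \cref{app-sim} requires controlling the \emph{actual} low-energy subspace of $\Delta\HSF+H_1$, not the kernel of $\HSF$: the perturbation $H_1$ has norm of order $T$ and mixes the kernel with excited states, so ``scaling $\Delta$ so that the gap dominates'' must be made quantitative. The paper does this via the first-order simulation lemma of \cite{BH17} (\cref{lem:firstorder}) applied to $H_0=2T^2\HSF$, which produces the isometry $\tilde V$ onto the true low-energy space with $\|\tilde V-V'\|=\BigO(T^3/\Delta)$ and spectral error $\BigO(T^4/\Delta)$, and hence the explicit condition $\Delta\ge\Delta'T^2+T^3/\eta+T^4/\epsilon$. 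Your write-up conflates the two error sources---the perturbative deviation between $\tilde V$ and the full-history-state isometry (controlled by $\Delta$) and the deviation between the full-history-state isometry and the local, idling-only isometry (controlled by $L/T$)---attributing the closeness of $\tilde V$ to the tensor-product isometry entirely to the idling. Without the perturbative step you have not verified condition (ii) of \cref{app-sim}; the step is standard, but it is where the $\Delta$-scaling of the lemma actually comes from.
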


\begin{proof}[Proof of \Cref{lem:dovetailed-qpe}]
To prove this we show that the $\Huniv(x)$ can satisfy the definition to be an approximate simulation of an arbitrary ``target Hamiltonian" $\Htarget$, to any desired accuracy.
We break up the proof into multiple parts. 
First we construct a history state Hamiltonian $\HSF(x)$, which encodes two Turing machine computations: $M_1$ which extracts a description of $\Htarget$ from a parameter of $\HSF$, and $\MPE$ which carries out phase estimation on the unitary generated by $\Htarget$.
Then we define the one-body projectors $\Pi_\alpha$ and $\Pi_\beta$ which break up the ground space degeneracy of $\HSF$, and inflict just the right amount of penalty to approximately reconstruct the spectrum of $\Htarget$ in its entirety.

\paragraph{Construction of H\textsubscript{SF}.} \label{HSF_section}
$\HSF$ is a standard form history state Hamiltonian with a ground space laid out in \cref{james-lemma}.
The local states of the spins on which $\HSF$ acts are divided into multiple ``tracks''. There are a constant number of these, hence a constant local Hilbert space dimension. The exact number will depend on the standard form construction being used.
Each track serves its own purpose, as outlined in \cref{tab:local-hs}. See \cite{Gottesman2009,spec-gap} for more detail.

\begin{table}
\centering
\begin{tabular}{cl}
\toprule
Track & Purpose \\
\midrule
$1$ & Input track, contains input state $\ket{\psi} \in \mathbb{C}^2$ followed by string of $\ket{0}$s \\
\hline
$2$ & Turing machine work tape (shared by $M_1$ and $\MPE$ )\\
\hline
$3$ & Tape head and state for $M_1$ \\
\hline
$4$ & Tape head and state for $\MPE$ \\
\hline
$5,6,\ldots$ & Clock tracks for standard form clock construction \\
\bottomrule
\end{tabular}
\caption{Local Hilbert space decomposition for $\HSF$.}\label{tab:local-hs}
\end{table}

The QTM $\MPE$ reads in the description of $\Htarget$---provided as integer $x\in\field N$ output by the Turing machine $M_1$ whose worktape it shares.
$\MPE$ further reads in the unconstrained input state $\ket\psi$ (see \cref{tab:local-hs} for details of the local Hilbert space decomposition).
But instead of proceeding immediately, $\MPE$ idles for $L$ time-steps (where $L$ is specified in the input string $x$, as explained in \cref{sec:digital-encoding}), before proceeding to carry out the quantum phase estimation algorithm.

The quantum phase estimation algorithm is carried out with respect to the unitary $U = \ee^{\ii \Htarget\tau}$ for some $\tau$ such that $\| \Htarget\tau \| < 2\pi$.
It takes as input an eigenvector $\ket{u}$ of $U$, and calculates the eigenphase $\phi_u$.
The output of $\MPE$ is then the pair of integers $(a_u,b_u)$ (corresponding to the extracted phase $\phi_u=\sqrt2 a_u - b_u$ as explained in \cref{rem:2a-b}), specified in binary on an output track.
To calculate $\lambda_u$---the eigenvalue of $\Htarget$---to accuracy $\epsilon$ requires determining $\phi_u$ to accuracy $\BigO(\epsilon/\|\Htarget\|)$ which takes $\BigO(\|\Htarget\|/\epsilon)$ uses of $U=\ee^{\ii \Htarget \tau}$. The unitary $U$ must thus be implemented to accuracy $\BigO(\epsilon / \|\Htarget\|)$, which is done using \cref{lem:sparse-sim}; the latter introduces an overhead $\poly(n,d^k,\|\Htarget\|,\tau,1/\epsilon)$ in the system size $n$, local dimension $d$, locality $k$, and target accuracy $\epsilon$.
The error overhead of size $\poly1/\epsilon$ due to the digital simulation of the unitary is thus polynomial in the precision, as are the $\propto 1/\epsilon$ repetitions required for the QPE algorithm.
The whole procedure takes time
\begin{equation}\label{eq:QPE-TPE}
\TPE\defas \poly(d^k, \|\Htarget\|/\epsilon,n).
\end{equation}
 
In our construction the input to $\MPE$ is not restricted to be an eigenvector of $\ket{u}$, but it can always be decomposed as $\ket{\psi} = \sum_u m_u \ket{u}$.
By linearity, for input $\ket{\psi} = \sum_u m_u \ket{u}$ the output of $\MPE$ will be a superposition in which the output $(a_u,b_u)$ occurs with amplitude $m_u$.

After $\MPE$ has finished its computation, its head returns to the end of the chain.
A dovetailed counter then decrements $a_u, a_u-1, \ldots, 0$ and $b_u, b_u-1, \ldots, 0$.\footnote{For general input state $\ket{\psi} = \sum_u m_u \ket{u}$ there will be a superposition where the counter $a_u, a_u-1, \ldots, 0$ and $b_u, b_u-1, \ldots, 0$ occurs with amplitude $m_u$.}
For each timestep in the counter $a_u, a_u-1, \ldots, 0$ the Turing machine head changes one spin to a special flag state $\ket{\Omega_a}$ which does not appear anywhere else in the computation.
While for each timestep in the counter $b_u, b_u-1, \ldots, 0$ the Turing machine head changes one spin to a different flag state $\ket{\Omega_b}$.
(See e.g.~\cite[Lem.~16]{Bausch2018b}) for a construction of a Turing machine with these properties.)

By \cref{james-lemma}, the ground space $\mathcal L$ of $\HSF$ is spanned by computational history states as given in \cref{def:history-state}, and is degenerate since any input state $\ket\psi$ yields a valid computation.
Therefore:
\begin{equation}
\mathrm{ker}(\HSF) = \mathcal{L} =  \mathrm{span}_{\ket{\psi}}\left(\frac{1}{\sqrt{T}} \sum_{t=1}^{T}  \ket{\psi^{(t)}}\ket{t}\right)
\end{equation}
where $\ket{\psi^{(t)}}$ denotes the state of the system at time step $t$ if the input state was $\ket{\psi}$.

\paragraph{A Local Encoding.}
In order to prove that $\Huniv(N)$ can simulate all quantum Hamiltonians, we need to demonstrate that there exists a local encoding $\mathcal{E}(M)$ such that the conditions of \cref{app-sim} are satisfied.
To this end, let
\newcommand{\Phidle}{\Phi_\mathrm{idling}}
\[
\ket{\Phidle(\psi)} \defas \frac{1}{\sqrt{L'}} \sum_{t=1}^{L'} \ket{\psi^{(t)}}\ket{t}
\]
where $L' = T_1 + L$, and where $T_1$ is the number of time steps in the $M_1$ computation.
This is the history state up until the point that $\MPE$ begins its computation (i.e. the point at which the `idling to enhance coherence' ends).
So, throughout the computation encoded by this computation the spins which encode the information about the input state remain in their initial state, and we can write:
\[
\ket{\Phidle(\psi)} = \ket{\psi} \otimes \frac{1}{\sqrt{L'}} \sum_{t=1}^{L'} \ket{t}
\]
\newcommand{\Phicomp}{\Phi_\mathrm{comp}}
The rest of the history state we capture in
\[
\ket{\Phicomp(\psi)} \defas \frac{1}{\sqrt{T - L'}} \sum_{t=L'+1}^{T} \ket{\psi^{(t)}}\ket{t},
\]
such that the total history state is
\[
\ket{\Phi(\psi)} = \sqrt{\frac{L'}{T}}  \ket{\Phidle(\psi)} + \sqrt{\frac{T - L'}{T}}  \ket{\Phicomp(\psi)}.
\]

We now define the encoding $\mathcal{E}(M) = V M V^\dagger$ via the isometry
\begin{equation}
V = \sum_i \ket{\Phidle(i)}\bra{i}.
\end{equation}
where $\ket{i}$ are the computational basis states (any complete basis will suffice).
$\mathcal E$ is a local encoding, which can be verified by a direct calculation:
\newcommand{\phys}{^\mathrm{phys}}
\begin{equation}
\begin{split}
\mathcal{E}(A_j \otimes \identity) & = \sum_{ik}\ket{\Phidle(i)} \bra{i}(A_j \otimes \identity)\ket{k}\bra{\Phidle(k)} \\
& = \sum_{ik} \ket{i}\bra{i}(A_j \otimes \identity) \ket{k}\bra{k}   \otimes \frac{1}{L} \sum_{tt'=1}^L \ket{t}\bra{t'}  \\
& = (A_j \otimes \identity) \sum_{i} \ket{i}\bra{i}  \otimes \frac{1}{L} \sum_{tt'=1}^L \ket{t}\bra{t'} \\
& =\left( A\phys_j \otimes \identity\right) \sum_i\ket{\Phidle(i)}\bra{\Phidle(i)} \\
& = \left(A\phys_j \otimes \identity\right)\mathcal{E}(\identity),
\end{split}
\end{equation}
where $A\phys_j$ is the operator $A$ acting on the Hilbert space corresponding to the $j$\textsuperscript{th} qudit.

We now consider the encoding $\mathcal{E}'(M) = V'MV^{\prime\dagger}$, defined via
\begin{equation}
\label{eq:V'}
V' = \sum_i \ket{\Phi(i)}\bra{i}.
\end{equation}
We have that
\begin{equation}
\label{eq:V-V'}
\begin{split}
\|V' - V\|^2 & = \left\| \sum_i \left(\ket{\Phi(i)}\bra{i} - \ket{\Phidle(i)}\bra{i}\right) \right\| ^2\\
 & = \left\| \sum_i \left(\sqrt{\frac{T -L'}{T}}\ket{\Phicomp(i)}\bra{i} + \left(\sqrt{\frac{L'}{T}}-1\right)\ket{\Phidle(i)}\bra{i} \right) \right\|^2 \\
 &  \leq 2\left(1-\sqrt{\frac{L'}{T}}\right)\le 2\frac{T-L'}T=2\frac{\TPE}{T}.
\end{split}
\end{equation}
By \cref{james-lemma}, $S_{\mathcal{E}'}$ is the ground space of $\HSF$.

\paragraph{Splitting the Ground Space Degeneracy of H\textsubscript{\normalfont{\textbf{SF}}}.}
What is left to show is that there exist one body-projectors $\Pi_{\alpha}$ and $\Pi_{\beta}$ which add just the right amount of energy to states in the kernel $\mathcal L(\HSF)$ to reproduce the target Hamiltonian's spectrum.
We first choose the one body terms in $\Huniv$ to be projectors onto local subspaces which contain the two states which are outputs of the $\MPE$ computation - $\ket{\Omega_a}$ and $\ket{\Omega_b}$:
\[
    \Pi_a \defas \sum_{i=1}^N \ketbra{\Omega_a}_i
    \quad\text{and}\quad
    \Pi_b \defas \sum_{i=1}^N \ketbra{\Omega_b}_i.
\]

We have shown that if the input state is $\ket u$, which is an eigenstate of $U$ with eigenphase $\phi_u = a_u\sqrt{2} - b_u$, then the history state will contain $a_u$ terms with one spin in the state $\ket{\Omega_a}$ and $b_u$ terms with one spin in the state $\ket{\Omega_b}$ (each term in the history state will have amplitude $\frac{1}{T}$).
If the input is a general state $\ket\psi = \sum_u m_u \ket u$ then for each $u$ the history state will contain $a_u$ terms with one spin in the state $\ket{\Omega_a}$ and $b_u$ terms with one spin in the state $\ket{\Omega_b}$, where now each of these terms has amplitude $m_u / T$.

Let $\Pi \defas \sum_i \ket{\Phi(i)}\bra{\Phi(i)}$ for some complete basis $\ket{i}$,
and we define $H_1\defas T(\sqrt2 \Pi_a - \Pi_b)$, where $T$ is the total time in the computation.
It thus follows that the energy of $\ket{\Phi(u)}$ with respect to the operator $\Pi H_1 \Pi$ is given by $\phi_u  + \BigO(\epsilon)$.

Finally, we need the following technical lemma from \cite{BH17}.
\begin{lemma}[First-order simulation~\cite{BH17} ]
	\label{lem:firstorder}
	Let $H_0$ and $H_1$ be Hamiltonians acting on the same space and $\Pi$ be the projector onto the ground space of $H_0$. Suppose that $H_0$ has eigenvalue 0 on $\Pi$ and the next smallest eigenvalue is at least 1.
	Let $V$ be an isometry such that $VV^{\dagger}=\Pi$ and
	\begin{equation}
	\label{eq:firstorderrequirement}
	\|V \Htarget V^\dag - \Pi H_1 \Pi\| \le \epsilon/2.
	\end{equation}
	Let $H_{\operatorname{sim}} = \Delta H_0 + H_1$ .
	Then there exists an isometry $\tilde{V}$ onto the the space spanned by the eigenvectors of $H_{\operatorname{sim}}$ with eigenvalue less than $\Delta/2$ such that
	\begin{enumerate}
		\item $\|V-\tilde{V}\| \le \BigO(\|H_1\|/\Delta)$
		\item $\|\tilde{V}H_{\operatorname{target}} \tilde{V}^{\dagger} -H_{\operatorname{sim}< \Delta/2} \| \le \epsilon/2 + \BigO(\|H_1\|^2/\Delta)$
	\end{enumerate}
\end{lemma}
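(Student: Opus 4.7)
The plan is to use standard first-order perturbation theory, in the spirit of Schrieffer--Wolff. Since $H_0 \succeq 0$ with $\ker H_0 = \mathrm{Im}(\Pi)$ and spectral gap at least $1$ above the origin, for $\Delta \gg \|H_1\|$ the spectrum of $H_{\mathrm{sim}}=\Delta H_0 + H_1$ splits cleanly into a low-lying cluster of width at most $2\|H_1\|$ near $0$, and a high-lying cluster starting above $\Delta - \|H_1\|$. Taking $\Delta > 4\|H_1\|$, the cutoff at $\Delta/2$ separates them, so the projector $\tilde{\Pi}$ onto eigenvectors of $H_{\mathrm{sim}}$ below $\Delta/2$ has the same rank as $\Pi$.

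For claim (1), I would bound $\|\Pi - \tilde{\Pi}\|$ via the resolvent identity
\[
\tilde{\Pi} - \Pi = \frac{1}{2\pi \ii} \oint_{\Gamma}\left[(z - H_{\mathrm{sim}})^{-1} - (z - \Delta H_0)^{-1}\right]\, dz,
\]
on a contour $\Gamma$ of radius $\Delta/2$ enclosing the origin. Expanding $(z-H_{\mathrm{sim}})^{-1}$ as a Neumann series in $H_1$ around $(z-\Delta H_0)^{-1}$, and using $\|(z-\Delta H_0)^{-1}\| \leq 2/\Delta$ uniformly on $\Gamma$, yields $\|\Pi-\tilde{\Pi}\| = \BigO(\|H_1\|/\Delta)$. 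Setting $\tilde{V}$ to be the polar part of $\tilde{\Pi} V$, namely $\tilde{V}\defas \tilde{\Pi} V (V^\dagger \tilde{\Pi} V)^{-1/2}$ (well-defined for $\Delta$ large), gives $\tilde{V}\tilde{V}^\dagger = \tilde{\Pi}$ and $\|V-\tilde{V}\| = \BigO(\|\Pi - \tilde{\Pi}\|) = \BigO(\|H_1\|/\Delta)$, establishing the first claim.

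For claim (2), I would invoke the second-order Schrieffer--Wolff expansion: there is a near-identity unitary $U_{\mathrm{SW}} = \exp(S)$ with $\|S\|=\BigO(\|H_1\|/\Delta)$ that block-diagonalises $H_{\mathrm{sim}}$ with respect to $\Pi$ and $\identity-\Pi$, whose low-energy block equals $\Pi H_1 \Pi + \BigO(\|H_1\|^2/\Delta)$ as an operator on $\mathrm{Im}(\Pi)$. Transferring this back through the rotation $U_{\mathrm{SW}}$ onto $\mathrm{Im}(\tilde{\Pi})$ via $\tilde{V}$, and noting that $\|H\| \leq \|H_1\| + \epsilon/2$ (since $V$ is an isometry), the triangle inequality together with the hypothesis $\|VHV^\dagger - \Pi H_1 \Pi\|\leq \epsilon/2$ yields $\|\tilde{V} H \tilde{V}^\dagger - H_{\mathrm{sim}<\Delta/2}\|\leq \epsilon/2 + \BigO(\|H_1\|^2/\Delta)$.

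The main obstacle is controlling the constants in the Schrieffer--Wolff second-order error: one must verify that the first-order term of the block-diagonalised effective Hamiltonian is exactly $\Pi H_1 \Pi$ (so the leading error is genuinely $\|H_1\|^2/\Delta$ rather than $\|H_1\|/\Delta$), which requires showing that the anti-Hermitian $S$ can be chosen to cancel $\Pi H_1 (\identity-\Pi) + (\identity-\Pi)H_1 \Pi$ exactly at first order. Since the statement is lifted from \cite{BH17}, the cleanest route in a final write-up is simply to cite the explicit perturbation bounds proved there.
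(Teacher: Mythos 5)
The paper does not actually prove \cref{lem:firstorder}: it imports the statement verbatim from \cite{BH17} and immediately applies it, so there is no in-paper argument to compare against. Your sketch is the standard route by which such first-order reductions are established (and is essentially the argument in \cite{BH17} and its antecedents on the Schrieffer--Wolff transformation): a resolvent contour integral on the circle of radius $\Delta/2$ to get $\|\Pi-\tilde{\Pi}\|=\BigO(\|H_1\|/\Delta)$, the polar isometry $\tilde{V}=\tilde{\Pi}V(V^\dagger\tilde{\Pi}V)^{-1/2}$ for claim (1), and the second-order Schrieffer--Wolff expansion for claim (2). Both steps check out: on the contour $\|(z-\Delta H_0)^{-1}\|\le 2/\Delta$, the Neumann series converges for $\Delta>2\|H_1\|$, and $V^\dagger\tilde{\Pi}V=\identity+\BigO(\|H_1\|/\Delta)$ is invertible, so $\tilde{V}$ is well defined with $\tilde{V}\tilde{V}^\dagger=\tilde{\Pi}$ (the ranks match because $\|\Pi-\tilde{\Pi}\|<1$).

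One point worth making explicit, since it is the reason the Schrieffer--Wolff machinery is genuinely needed for claim (2) rather than a direct substitution: writing $\tilde{\Pi}=\Pi+E$ with $\|E\|=\BigO(\|H_1\|/\Delta)$ and expanding $\tilde{\Pi}(\Delta H_0+H_1)\tilde{\Pi}$ naively gives a term $\Delta E H_0 E$, whose crude bound $\BigO(\|H_0\|\,\|H_1\|^2/\Delta)$ carries an unwanted factor of $\|H_0\|$. The block-diagonalising unitary (or, equivalently, a resolvent bound on $H_0^{1/2}(\tilde{\Pi}-\Pi)$) is what removes this dependence and yields an error of $\BigO(\|H_1\|^2/\Delta)$ with the first-order block exactly $\Pi H_1\Pi$. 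You correctly flag this as the crux; since the lemma is quoted from \cite{BH17}, citing the explicit bounds there is the appropriate resolution, exactly as the paper does.
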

We will apply Lemma~\ref{lem:firstorder} with $H_0=2T^2\HSF$ and $H_1=T(\sqrt2 \Pi_a - \Pi_b)$.
We have $\lambda_{\min}( \HSF) = 0$ and the next smallest non-zero eigenvalue of $\HSF$ is $(1-\cos(\pi/2T)\ge 1/2T^2)$ by \cref{james-lemma}, so $H_0=2T^2\HSF$ has next smallest non-zero eigenvalue at least 1.
Moreover, $\left\|H_1\right\| = \sqrt{2}T$.
Note that $V'$, as defined in \cref{eq:V'}, is an isometry which maps onto the ground state of $H_0$.
By construction we have that the spectrum of $\Htarget$ is approximated to within $\epsilon$ by $H_1$ restricted to the ground space of $\HSF$, thus $\|\Pi H_1 \Pi - \tilde{\mathcal{E}}(H)\| \leq \epsilon$.

Lemma~\ref{lem:firstorder} therefore implies that there exists an isometry $\tilde{V}$ that maps exactly onto the low energy space of $\Huniv$ such that $\|\tilde{V}-V'\|\le \BigO(\sqrt{2}T/(\Delta/2T^2))=\BigO(T^3/\Delta)$.
By the triangle inequality and \cref{eq:V-V'}, we have:
\begin{equation}\label{eq:T-scaling}
\|V-\tilde{V}\|\le \|V-V'\|+\|V'-\tilde{V}\|\le O \left(\frac{T^3}{\Delta} + \frac{\TPE}{T}\right).
\end{equation}

The second part of the lemma implies that
\begin{equation}
\|\tilde{V} \Htarget \tilde{V}^{\dagger} -H_{\operatorname{univ} <\Delta'/2}\| \le \epsilon/2+ \BigO((\sqrt{2}T)^2/(\Delta/2T^2))=\epsilon/2 +\BigO(T^4/\Delta).
\end{equation}
Therefore, the conditions of \cref{app-sim} are satisfied for a $(\Delta',\eta,\epsilon')$-simulation of $\Htarget$, with $\eta = O \left(T^3/\Delta + \TPE /T\right)$, $\epsilon' = \epsilon+\BigO(T^4 / \Delta)$ and $\Delta'= \Delta/2T^2$.
Therefore we must increase $L$ so that $T \ge \BigO(\TPE /\eta)=\poly(n, d^k, \|H\|,1/\epsilon,1/\eta)$ by \cref{eq:QPE-TPE}, (thereby determining $x$), and increase $\Delta$ so that 
\begin{equation}
\label{eq:Deltascaling}
\Delta \ge \Delta' T^2 +\frac{T^3}{\eta}+\frac{T^4}{\epsilon}
\end{equation}
to obtain a $(\Delta', \eta, \epsilon)$-simulation of the target Hamiltonian.
The claim follows.
\end{proof}

We can now prove our main theorem:

\begin{theorem} \label{main-theorem-1}
There exists a  two-body interaction depending on a single parameter $h(\phi)$ such that the family of translationally-invariant Hamiltonians on a chain of length $N$,
\begin{equation}
\Huniv(\phi, \Delta, T) = \Delta \sum_{\langle i,j \rangle} h(\phi)_{i,j} + T \sum_{i=0}^{N-1} \left(\sqrt{2} \Pi_{\alpha} - \Pi_{\beta} \right)_i,
\end{equation}
is a universal model, where $\Delta$, $T$ and $\phi$ are parameters of the Hamiltonian, and the first sum is over adjacent site along the chain.
Furthermore, the universal model is efficient in terms of the number of spins in the simulator system.
\end{theorem}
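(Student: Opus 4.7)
The plan is to specialize \Cref{lem:dovetailed-qpe} to the translationally-invariant standard form construction of~\cite{spec-gap}, in which the input integer $x$ is itself encoded in the phase parameter $\phi$ of a single two-body coupling rather than in a meta-parameter such as the chain length $N$. This will yield a simulator whose spin count depends polynomially on the description length of the target Hamiltonian $H$.

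First I would invoke the translationally-invariant standard form construction of~\cite{spec-gap}: there one builds a 1D nearest-neighbour Hamiltonian $\HSF(\phi)$ whose interaction is a single Hermitian operator $h(\phi)$ on $\field{C}^d\otimes\field{C}^d$ depending only on the phase $\phi$. The ground space encodes the history of a QTM $M_1$ that first runs a quantum phase-estimation subroutine on the unitary $U_\phi = \diag(1,\ee^{\ii\phi})$ to extract, to arbitrary precision, the binary expansion of $\phi = \eta/2^{\lceil\log_2\eta\rceil}$. By choosing $\phi$ so that its binary digits realise the integer $x$ with binary expansion $B(x)$ from \Cref{sec:digital-encoding}, $M_1$ writes out a full description of the $k$-local target Hamiltonian $H$ on its work tape.

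Next I would dovetail this phase-extraction QTM with the phase-estimation machine $\MPE$ from \Cref{lem:dovetailed-qpe}, so that once $x$ is available on the shared work tape, $\MPE$ performs QPE on $U=\ee^{\ii H\tau}$ and writes out the eigenphase $\phi_u = \sqrt{2}a_u - b_u$ via the flag-state counters $\ket{\Omega_a},\ket{\Omega_b}$. All transition and penalty rules of this dovetailed machine can be absorbed into a single two-body interaction $h(\phi)$; the phase enters only in the quantum part of the transition rule implementing $U_\phi$, while every other matrix entry of $h(\phi)$ is a fixed constant. The one-body term $\sqrt{2}\Pi_\alpha - \Pi_\beta$ is already translationally invariant, as it is a uniform sum of identical one-body projectors, so the full Hamiltonian $\Huniv(\phi,\Delta,T)$ falls within the family of \Cref{lem:dovetailed-qpe}. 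Applying that lemma directly gives that $\Huniv$ is a $(\Delta',\eta,\epsilon)$-simulator of any local Hamiltonian $H$ once $\Delta$, $T$ and $\phi$ are chosen as in its proof.

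Finally I would verify efficiency in the spin count: because $x$ is carried by $\phi$ rather than by the chain length, the chain need only be long enough to host the dovetailed history state, namely $N = \BigO(T) = \poly(n,d^k,\|H\|,1/\epsilon,1/\eta)$ by \cref{eq:QPE-TPE}, giving the advertised polynomial overhead in the number of spins (though not in the operator norm, since $\Delta$ must scale as in \cref{eq:Deltascaling}). The main obstacle will be the bookkeeping needed to confirm that the standard form construction of~\cite{spec-gap} can indeed be packaged into a \emph{single} $\phi$-dependent two-body coupling $h(\phi)$ while still faithfully implementing both $M_1$ and the dovetailed $\MPE$; this hinges on the fact that in that construction the phase only appears in the quantum part of the transition rules, whereas all clock, propagation and illegal-configuration penalty terms are fixed and $\phi$-independent.
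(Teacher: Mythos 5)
Your construction follows the paper's proof exactly: the paper likewise takes the standard-form Hamiltonian of~\cite{spec-gap} with the target description carried by the phase $\phi$ of the two-body coupling, dovetails the phase-extraction QTM $M_1$ with $\MPE$, and invokes \cref{lem:dovetailed-qpe}. The problem is in your final step, the efficiency verification, where you assert $N = \BigO(T) = \poly(n,d^k,\|H\|,1/\epsilon,1/\eta)$ ``by \cref{eq:QPE-TPE}''. Both halves of that equation are wrong. First, $T$ is not polynomial: \cref{eq:QPE-TPE} only bounds $\TPE$, the runtime of the \emph{second} phase estimation (on $\ee^{\ii H\tau}$), whereas the total runtime $T = T_1 + L + \TPE$ includes $T_1$, the time for $M_1$ to extract the $|\phi| = \poly(n,d^k,\log(\|H\|/\delta))$ binary digits of $\phi$ from $U_\phi$, which costs $\BigO(|\phi|\,2^{|\phi|})$ --- exponential in the description length, as the paper stresses both in \cref{sec:c-to-ham} and in the remark immediately following the theorem. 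Second, $N = \BigO(T)$ is not the right relationship for these constructions: the chain length is governed by the \emph{space} requirement of the dovetailed computation, not its time, because the standard-form clock of~\cite{spec-gap} can count to times exponential in the chain length. If $N = \BigO(T)$ were actually the correct scaling, your own premises would force $N$ to be exponential and the efficiency claim --- which is half of the theorem statement --- would fail.

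The correct argument is the one the paper gives: the space requirement of the computation is $\BigO(|\phi|)$, so $N = \BigO(|\phi|) = \poly(n,d^k,\|H\|,1/\eta,1/\epsilon)$, while the exponential runtime $T_1 = \BigO(|\phi|\,2^{|\phi|})$ is accommodated by the clock without increasing the chain length; the price is paid only in the operator norm, since $T$ and hence $\Delta$ (via \cref{eq:Deltascaling}) must grow exponentially in the description length. You gesture at the norm inefficiency parenthetically but attribute it solely to $\Delta$, which suggests the time/space distinction --- the actual crux of the efficiency claim --- is missing from your argument rather than merely unstated.
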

\begin{proof}

The two body interaction $h(\phi)$ makes up a standard form Hamiltonian which encodes a QTM, $M_1$ dovetailed with the phase-estimation computation from \cref{lem:dovetailed-qpe}.
The QTM $M_1$ carries out phase estimation on the parameter $\phi$ in the Hamiltonian, and writes out the binary expansion of $\phi$ (which contains a description of the Hamiltonian to be simulated) on its work tape.
There is a standard form Hamiltonian in~\cite{spec-gap} which encodes this QTM, so by \cref{lem:dovetailed-qpe} we can construct a standard form Hamiltonian which simulates all quantum Hamiltonians by dovetailing $M_1$ with $\MPE$.

The space requirement for the computation is $\BigO(|\phi|)$, where $|\phi|$ denotes the length of the binary expansion of $\phi$, and the computation requires time $T_1 = \BigO(|\phi|2^{|\phi|})$ \cite[Theorem 10]{spec-gap-full}
As we commented in \cref{sec:spec-gap}, the standard form clock construction set out in \cite[Section 4.5]{spec-gap-full} allows for computation time of $ \BigO(|\phi|2^{|\phi|})$ using a Hamiltonian on $|\phi|$ spins.
We therefore find that for a $k$-local target Hamiltonian $\Htarget$ acting on $n$ spins of local dimension $d$, the number of spins required in the simulator system for a simulation that is $\epsilon$ close to $\Htarget$ is given by $N =  \BigO(|\phi|) =\poly\left(n,d^k,\|H\|,1/\eta,1/\epsilon \right)$.

Therefore, the universal model is efficient in terms of the number of spins in the simulator system as defined in \cref{def:efficient-sim}.
\end{proof}

Note that this universal model is \emph{not} efficient in terms of the norm $\|\Huniv\|$.
This is immediately obvious, since $\| \Huniv \| = \Omega(\Delta)$, and using the relations between  $\Delta'$, $\eta$, $\epsilon$, and  $T$ and $\Delta$ from \cref{lem:dovetailed-qpe,eq:Deltascaling},
\[T=T_1+L+\TPE=O\left(2^x+\poly\left(n,d^k,\| \Htarget \|, \frac1\epsilon, \frac1\eta \right)\right)
\quad \text{ and } \quad \Delta \ge \Delta' T^2 +\frac{T^3}{\eta}+\frac{T^4}{\epsilon}\]
by \cref{eq:QPE-TPE}, so $T,\Delta$ are both $\poly\left(2^x,\|\Htarget\|,\Delta',1/\epsilon, 1/\eta \right)$.
For a $k$-local Hamiltonian $\Htarget$ with description $x$ as presented in \cref{sec:digital-encoding}, $|x|=\Omega\left(md^{2k} \log(\|\Htarget\|md^{2k}/\delta)\right)$.

However if we only wish to simulate a translationally invariant $k$-local Hamiltonian $\Htarget$, this can be specified to accuracy $\delta$ with just $\log(\|\Htarget\|m d^{2k} /\delta)$ bits of information.
In this case (for $d,k=\BigO(1)$ and taking $\delta=\epsilon$), the interaction strengths are then $\poly(n,\|\Htarget\|,\Delta', \frac{1}{\eta},\frac{1}{\epsilon})$, and the whole simulation is efficient.

\Cref{lem:dovetailed-qpe} also allows the construction of a universal quantum simulator with two free parameters.
\begin{theorem}\label{main-theorem-2}
There exists a fixed two-body interaction $h$ such that the family of translationally-invariant Hamiltonians on a chain of length $N$,
\begin{equation}
\Huniv(\Delta, T) = \Delta \sum_{\langle i,j \rangle} h_{i,j} + T \sum_{i=0}^{N-1} \left(\sqrt{2} \Pi_{\alpha} - \Pi_{\beta} \right)_i,
\end{equation}
is a universal model, where $\Delta$ and $T$ are parameters of the Hamiltonian, and the first sum is over adjacent sites along the chain.
\end{theorem}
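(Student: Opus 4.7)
The plan is to apply \cref{lem:dovetailed-qpe} in essentially the same way as in the proof of \cref{main-theorem-1}, but with the first Turing machine $M_1$ replaced by the binary counter QTM used in~\cite{Gottesman2009}. A counter simply writes out the binary expansion of the chain length $N$ on its shared work tape, so the description of the target Hamiltonian can be encoded in $N$ itself rather than in a continuous phase carried by the interaction, and no tunable parameter is needed inside $h$.

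First I would take the fixed, translationally-invariant two-body interaction from~\cite{Gottesman2009} that implements this binary counter as a standard form Hamiltonian, and dovetail it with $\MPE$ using the techniques recalled in \cref{sec:c-to-ham}. The outcome is a single fixed two-body interaction $h$ together with a history state Hamiltonian $\HSF$ whose kernel is spanned by the joint computational history states of \cref{james-lemma}. Adding the one-body penalty $T \sum_i (\sqrt 2 \Pi_\alpha - \Pi_\beta)_i$ then recovers exactly $\Huniv(\Delta,T)$ as stated, with only $\Delta$ and $T$ free. Given any target Hamiltonian $H$, I would encode it (together with the precision $\Xi$ and idling length $L$) into an integer $x \in \field N$ using the scheme of \cref{sec:digital-encoding}, and choose the chain length $N = x$, so that the bit string placed on the work tape by the counter is precisely $B(x)$ as in \cref{eq:B(x)}.

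The remaining analysis is then identical to the proof of \cref{lem:dovetailed-qpe}: defining the local isometry $V$ from the idling portion of the history state, bounding $\| V - V'\|$ in terms of $\TPE/T$, applying \cref{lem:firstorder} with $H_0 = 2T^2 \HSF$ and $H_1 = T(\sqrt 2 \Pi_\alpha - \Pi_\beta)$, and choosing $\Delta, T$ large enough to obtain a $(\Delta',\eta,\epsilon)$-simulation of $H$. The only genuine technical point is to check that dovetailing a counter QTM with $\MPE$ preserves the standard form structure required by \cref{james-lemma}, which follows from the constructions in~\cite{Gottesman2009,spec-gap}.

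The main price to pay for removing the $\phi$ parameter from $h$ is that the chain length $N = x$ is now exponential in the description length $|x|$ of $H$; the resulting universal model is therefore inefficient in system size. This is exactly why the statement of \cref{main-theorem-2}, unlike that of \cref{main-theorem-1}, does not claim efficiency, and why the existence of a single-parameter universal family raised in the introduction remains an open question.
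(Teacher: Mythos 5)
Your proposal is correct and follows essentially the same route as the paper: take $M_1$ to be the binary counter QTM from~\cite{Gottesman2009}, encode the target Hamiltonian in the chain length $N$ via the scheme of \cref{sec:digital-encoding}, and invoke \cref{lem:dovetailed-qpe} for the rest. Your closing remark about the exponential system-size overhead also matches the paper's observation that $N = \poly\bigl(2^{\poly(n,\|H\|,1/\eta,1/\epsilon)}\bigr)$, so the simulation is not efficient.
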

\begin{proof}
As in \cref{main-theorem-1}, the two body interaction $h$ makes up a standard form Hamiltonian which encodes a QTM $M_1$ dovetailed with the phase-estimation computation from \cref{lem:dovetailed-qpe}.
It is based on the construction from~\cite{Gottesman2009}.

Take $M_1$ to be a binary counter Turing machine which writes out $N$---the length of the qudit chain---on its work tape. We will choose $N$ to contain a description of the Hamiltonian to be simulated, as per \cref{sec:digital-encoding}.
There is a standard form Hamiltonian in~\cite{Gottesman2009} which encodes this QTM, so by \cref{lem:dovetailed-qpe} we can construct a standard form Hamiltonian which simulates all quantum Hamiltonians by dovetailing $M_1$ with $\MPE$.

Since $B(N)$, as defined in \cref{eq:B(x)}, contains a description of the Hamiltonian to be simulated, we have that
\[
N = \poly\left(2^{\poly(n,\|\Htarget\|,1/\eta,1/\epsilon)} \right).
\]
The standard form clock used in the construction allows for computation time polynomial in the length of the chain, so $\exp(\poly)$-time in the size of the target system.
As before, by \cref{eq:QPE-TPE},  we require \[T=T_1+L+\TPE=O\left(N+\poly\left(n,d^k,\| \Htarget \|, \frac1\epsilon, \frac1\eta \right)\right)
\quad \text{ and } \quad \Delta \ge \Delta' T^2 +\frac{T^3}{\eta}+\frac{T^4}{\epsilon}.\]
\end{proof}

According to the requirements of \cref{app-sim}, the universal simulator of the second theorem is not efficient in either the number of spins, nor in the norm.
However---as was noted in~\cite{PiddockBausch}---this is unavoidable if there is no free parameter in the universal Hamiltonian which encodes the description of the target Hamiltonian:
a translationally invariant Hamiltonian on $N$ spins can be described using only $\BigO(\poly\log(N))$  bits of information, whereas a $k$-local Hamiltonian which breaks translational invariance in general requires $\poly(N)$ bits of information.
So, by a simple counting argument, we can see that it is not possible to encode all the information about a $k$-local Hamiltonian on $n$ spins in a fixed translationally invariant Hamiltonian acting on $\poly(n)$ spins.

We observe that the parameters $\Delta$ and $T$ are qualitatively different to $\phi$, in that they do not depend on the Hamiltonian to be simulated, but only the parameters $(\Delta',\epsilon,\eta)$ determining the precision of the simulation.
%
%
%

\subsection{No-Go for Parameterless Universality}
Is an explicit $\Delta$-dependence of a simulator Hamiltonian $\Huniv$ necessary to construct a universal model?
Note that an implicit dependence of $\Huniv$ on $\Delta$ is possible via the chain length $N=N(\Delta)$ in \cref{main-theorem-1}.
In the following, we prove that such an implicit dependence is insufficient, by giving a concrete counterexample for which an explicit $\Delta$-dependence is necessary.

To this end, we note that it has previously been shown~\cite{Aharonov2018a} that a degree-reducing Hamiltonian simulation (in a weaker sense of simulation, namely gap-simulation where only the ground state(s) and spectral gap are to be maintained) is only possible if the norm of the \emph{local} terms is allowed to grow.
In order to construct a concrete example in which an explicit $\Delta$-dependence is necessary, we first quote~\citeauthor{Aharonov2018a}'s result, and then translate the terminology to our setting.
\begin{theorem}[{\citeauthor{Aharonov2018a} (\cite[Thm.~1]{Aharonov2018a})}]\label{th:counterex}
For sufficiently small constants $\epsilon\ge0$ and $\tilde{\omega}\ge0$, there exists a minimum system size $N_0$ such that for all $N\ge N_0$ there exists no constant-local $[r,M,J]=[\BigO(1),M,\BigO(1)]$ gap simulation (where $r$ is the interaction degree, $M$ the number of local terms, and $J$ the local interaction strength of the simulator) of the Hamiltonian
\[
    H_A \defas \frac14 \sum_{i=1}^N \sum_{j<i} (1-\sigma_z^{(i)}) \otimes (1-\sigma_z^{(j)}) = \sum_{i=1}^N \sum_{j<i}\ketbra{1}^{(i)} \otimes \ketbra{1}^{(j)}
\]
with a localized encoding, $\epsilon$-incoherence, and energy spread $\tilde{\omega}$, for any number of Hamiltonian terms $M$.
\end{theorem}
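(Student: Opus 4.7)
The plan is to follow Aharonov, Kenneth, and Kindler's argument, since this is stated as a direct citation of their Theorem~1. The essential structural feature driving the impossibility is that $H_A$ combines a highly degenerate ground space with a constant gap in a genuinely non-local way: its $(N+1)$-dimensional kernel is spanned by all computational basis states with at most one excitation, while the first excited space consists of all two-excitation basis states at energy $1$. Any gap simulator must therefore discriminate, by an $\Omega(1)$ energy, between the one-excitation (ground) and the two-excitation (excited) sectors, simultaneously across all $\binom{N}{2}$ pairs of sites.

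First I would assume, for contradiction, the existence of a constant-local $[\BigO(1),M,\BigO(1)]$ simulator $H_{\mathrm{sim}}$ with a localized, $\epsilon$-incoherent encoding, and pull each low-excitation logical basis state back into the simulator's low-energy subspace in a way compatible with the local tensor-product structure of the encoding. The next step is a locality argument: each term of $H_{\mathrm{sim}}$ has constant-size support and constant operator norm, so it can carry energy sensitive to the presence of two excitations only when both excited sites lie inside its support. Constant interaction degree forces $M = \BigO(N)$, and each term ``covers'' only $\BigO(1)$ pairs of sites; the total budget of pair-discriminations available to $H_{\mathrm{sim}}$ is therefore $\BigO(N)$, whereas $H_A$ requires a uniform $\Omega(1)$ gap for all $\binom{N}{2} = \Omega(N^2)$ pairs. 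This combinatorial mismatch is the heart of the contradiction.

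The hard part will be making this locality argument rigorous in the presence of ancilla and the slack afforded by $\epsilon$-incoherence: the encoding may smear the logical subspace across an exponentially larger physical Hilbert space, so naive light-cone reasoning on bare supports fails. Aharonov, Kenneth, and Kindler handle this by extracting, for each pair of logical sites, an effective two-body operator on the logical subspace whose norm is controlled by the constant-degree, constant-strength local structure of $H_{\mathrm{sim}}$, and showing that these effective operators must reproduce $H_A$ on the ground space while collectively having total norm bounded independently of $N$. Carefully tracking the encoding isometry through the localized-encoding hypothesis and the incoherence parameter is what occupies most of their technical work, and is where I would expect to spend most of the effort in any self-contained rederivation.

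Finally, for the use we intend to make of the theorem in this paper, the plan is to apply it as a black box: a hypothetical ``parameterless'' translationally invariant universal Hamiltonian, varying only in chain length $N$, would in particular have to gap-simulate $H_A$ with constant local interaction strength, contradicting \cref{th:counterex}. Hence the explicit dependence on a parameter such as $\Delta$ in $\Huniv$ from \cref{main-theorem-1,main-theorem-2} cannot be eliminated.
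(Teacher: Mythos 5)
The paper does not prove this statement at all: it is quoted verbatim as Theorem~1 of \citeauthor{Aharonov2018a} and invoked purely as a black box in the proof of \cref{cor:counterex}, where the only work done is translating the gap-simulation terminology ($\epsilon$-incoherence, energy spread $\tilde\omega$, the $[r,M,J]$ parameters) into the paper's simulation framework and observing that a simulation in the sense of \cref{def:efficient-sim} with constant interaction degree would in particular be a gap simulation, contradicting the quoted theorem. Your final paragraph reproduces exactly this use of the result, so for the purposes of this paper your proposal is aligned with what is actually argued.

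Your sketch of the external proof is a reasonable reconstruction of the Aharonov--Kenneth--Kindler strategy (local flips on disjoint encoded supports commute through a constant-degree, constant-strength Hamiltonian, so most two-excitation states cannot be penalized), but one step as written is wrong and in fact contradicts the theorem statement itself: you claim ``constant interaction degree forces $M = \BigO(N)$,'' whereas the theorem explicitly holds \emph{for any number of Hamiltonian terms $M$} --- the simulator may use arbitrarily many ancilla qudits, so neither the physical system size nor $M$ is bounded in terms of $N$. The correct counting is not over terms but over \emph{pairs of logical supports}: with a localized encoding each logical qubit $i$ occupies a region $S_i$, constant degree and constant locality imply each $S_i$ is jointly touched with only $\BigO(1)$ other regions $S_j$ by any single term, so only $\BigO(N)$ of the $\binom{N}{2}$ logical pairs can acquire the required $\Omega(1)$ cross-term energy; for the remaining pairs the doubly-flipped state stays in the low-energy space while encoding an excited state of $H_A$. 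Your second-paragraph phrasing should be repaired along these lines if you intend the sketch to stand on its own; as a gloss on an imported citation it is otherwise fine.
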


\begin{corollary}\label{cor:counterex}
Consider a universal family of Hamiltonians with local interactions and bounded-degree interaction graph.
Hamiltonians in this family must have an explicit dependence on the energy cut-off ($\Delta$) below which they are valid simulations of particular target Hamiltonians.
\end{corollary}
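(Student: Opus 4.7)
The plan is to derive a contradiction from \cref{th:counterex}. Suppose toward contradiction that there exists a universal Hamiltonian family $\mathcal F'$ with interaction degree uniformly bounded by some constant $r$, and whose local interaction strengths are uniformly bounded by a constant $J$ \emph{independent} of the energy cut-off $\Delta$. By universality, for any fixed $\eta,\epsilon>0$ and any $\Delta \ge \Delta_0$, some $H'\in\mathcal F'$ must be a $(\Delta,\eta,\epsilon)$-simulation of the target Hamiltonian $H_A$ from \cref{th:counterex}, on $N$ qubits, for every $N\in\field N$.

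The main technical step is a dictionary between the two simulation frameworks: namely, to argue that a $(\Delta,\eta,\epsilon)$-simulation in the sense of \cref{app-sim} is, for $\eta,\epsilon$ sufficiently small, an $[r,M,J]$-gap-simulation in the sense of \cref{th:counterex} with incoherence at most $\epsilon$ and energy spread at most $\tilde{\omega}$, where $\tilde{\omega}$ can be made as small as desired by tightening $\eta$ and $\epsilon$. The required inputs are \cref{physical-properties}: item~(i) ensures that the low-lying eigenvalues of $H'$ approximate those of $\mathcal E(H_A)$ up to $\epsilon$, so the spectral gap of $H_A$ is preserved up to $2\epsilon$; the isometry $V$ together with the $\eta$-closeness of the approximate and exact encodings guarantees that the ground subspace of $H'$ lies $\BigO(\eta)$-close to $\mathcal E(\mathrm{ker}(H_A))$, yielding the required localization and bounded incoherence. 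Crucially, since $r$ and $J$ are bounded independently of $\Delta$ by hypothesis, we may shrink $\eta$ and $\epsilon$ below the thresholds appearing in \cref{th:counterex} without weakening the bound on $(r,J)$.

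With the translation in hand, pick any $N\ge N_0$ from \cref{th:counterex}. A $(\Delta,\eta,\epsilon)$-simulator of $H_A$ drawn from $\mathcal F'$ on this many qubits would constitute precisely the constant-degree, constant-strength gap-simulation that \cref{th:counterex} rules out, yielding the desired contradiction and forcing the local interaction strengths of any universal family to depend explicitly on $\Delta$. The hard part is the dictionary between the two simulation frameworks---in particular, verifying quantitatively that the localization requirements and the incoherence and energy-spread bounds of \cref{th:counterex} follow from the encoding-isometry closeness $\eta$ and the eigenvalue accuracy $\epsilon$ of \cref{app-sim}. Once that dictionary is established, the corollary reduces to a direct invocation of \cref{th:counterex}.
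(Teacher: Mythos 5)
Your proposal is correct and follows essentially the same route as the paper: assume a universal family with bounded degree and $\Delta$-independent coupling strengths, apply universality to the specific target $H_A$ of \cref{th:counterex}, observe that a $(\Delta,\eta,\epsilon)$-simulation in the sense of \cref{app-sim} is in particular a gap simulation with arbitrarily small incoherence and energy spread, and derive a contradiction with the no-go theorem. The only difference is presentational: the paper delegates the ``dictionary'' between the two simulation notions to the discussion in the cited reference, whereas you sketch it explicitly and correctly flag it as the step carrying the technical weight.
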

\begin{proof}
We first explain the notation used in \cref{th:counterex}.
As mentioned, the notion of gap simulation is weaker than \cref{app-sim}. Only the (quasi-) ground space $\mathcal L$ of $H_A$, rather than the full Hilbert space, needs to be represented $\epsilon$-coherently: $\| H_A|_{\mathcal L} - \tilde H_A|_{\mathcal L}\| < \epsilon$, where $\cdot|_{\mathcal L}$ denotes the restriction to $\mathcal L$). And only the spectral gap above the ground space, rather than the full spectrum, must be maintained: $\tilde{\gamma}=\Delta(\tilde H_A) \ge \gamma = \Delta(H_A)$.
The rest of the spectrum in the simulation can be arbitrary.
Energy spread in this context simply means the range of eigenvalues within $\mathcal L$ spreads out at most such that $|\lambda_0 - \tilde{\lambda}_0|\le \tilde \omega\gamma$.

A $[\BigO(1),M,\BigO(1)]$ simulation with the above parameters then simply means an $\epsilon$-coherent gap simulation, constant degree and local interaction strength, where $M$---the number of local terms in the simulator---is left unconstrained, and the eigenvalues vary by at most $\tilde{\omega}\gamma$.

It is clear that this notion of simulation falls within our more generic framework of simulation (cf.~\cite[Sec.~1.1]{Aharonov2018a}): a simulation of $H_A$ \emph{also} defines a valid gap simulation of $H_A$. Since by \cref{def:efficient-sim} this simulation can be made arbitrarily precise, with parameters $\epsilon,\tilde\omega$ arbitrarily small, and has constant interaction degree by assumption, this contradicts \cref{th:counterex}.
\end{proof}

\section{Applications to Hamiltonian Complexity}\label{sec:complexity}

As already informally stated, the \lham problem is the question of approximating the ground state energy of a local Hamiltonian to a certain precision.
Based on a history state embedding of a \QMA verifier circuit and on Feynman's circuit-to-Hamiltonian construction~\cite{Feynman1986}, Kitaev proved in~\citeyear{Kitaev2002} that \lham with a promise gap that closes inverse-polynomially in the system size is \QMA-complete~\cite{Kitaev2002}.

To be precise, let us start by defining the \lham problem. We note that variants of this definition can be found throughout literature which commonly omit one or more of the constraints presented herein, in particular with regards to the bit precision to the input matrices.
In order to be precise, we explicitly list the matrix entries' bit precision as extra parameter $\Sigma$ in the following definition.
\begin{problem}[\lham$(f,\Sigma)$]
\probleminput{
Local Hamiltonian $H=\sum_{i=1}^m h_i$ on an $N$-partite Hilbert space of constant local dimension, and $m\le \poly (N)$.
Each $h_i \defas h_{S_i} \otimes \1_{S_i^c}$ acts non-trivially on at most $|S_i| \le k$ sites, and $\| h_i \| \le 1$.
Two numbers $\alpha,\beta>0$.
The bit complexity of the matrix entries of $h_i$ is $\BigO(\Sigma(N))$.
}
\problempromise{
$\beta-\alpha \ge f(N)$, and $\lmin(H)$ either $\ge\beta$, or $\le\alpha$.
}
\problemquestion{
\YES\ if  $\lmin(H) \ge \beta$, else \NO.
}
\end{problem}
Kitaev's \QMA-completeness result was shown for a promise gap $f(N) = \poly N$~\cite[Th.~14.1]{Kitaev2002}. Following the proof construction therein reveals that this was done for a bit complexity of the matrix entries $\Sigma(N)=\BigO(1)$ (assuming a discrete fixed gateset for the encoded \QMA verifier).
Since his seminal result, the statement has been extended and generalized to ever-simpler many-body systems~\cite{Oliveira2008,Hallgren2013,Aharonov2009}. Some of these results allow a coupling constant to scale in the system size, e.g.\ as $\poly N$---i.e.\ the matrix entries now feature a bit precision of $\Sigma(N)=\poly\log N$.

We remark that despite the apparent relaxation in the bit precision, these results are \emph{not} weaker than~\citeauthor{Kitaev2002}'s.
Since the number of local terms $m=\poly N$, a polynomial number of local terms of $\BigO(1)$ bit complexity acting on the same sites can already be combined to create $k$-local interactions with polynomial precision (logarithmic bit-precision, $\Omega(1/\poly)\cap\BigO(\poly)$). (Similar to how the encoding in \cref{sec:digital-encoding,rem:2a-b} works by adding up integers to approximate a number in the interval $[0,1]$.) We also emphasize that the overall bit complexity of the input is already $\poly N$, as there are that many local terms to specify in the first place. Indeed, many times in the literature, the matrix entries of the \lham problem are simply restricted to bit precision $\Sigma=\poly N$ (e.g.~\cite{Cubitt2013}).

However, translationally-invariant spin systems are common in condensed matter models of real-world materials, whereas models with precisely-tuned interations that differ from site to site are less realistic.
It is known that \QMA-hardness of approximating the ground state energy to $1/\poly$ precision in the system size is a property of non-translationally-invariant couplings, that prevails even when those couplings are arbitrarily close to identical~\cite[Cor.~21]{Bausch2018c}. But even small amounts of disorder can radically change the properties of quantum many-body systems compared to strict translational invariance, which is the intuition behind this result. A variant of \lham for the strictly translationally-invariant case can be formulated as follows:

\begin{problem}[\tilham$(f,\Sigma)$]
\probleminput{
Translationally-invariant\footnote{Naturally, translational invariance is defined with respect to the Hilbert space's interaction graph on $\Lambda$.}
 local Hamiltonian $H=\sum_{i\in\Lambda} h_i$ on an $N$-partite Hilbert space $(\field C^d)^{\otimes\Lambda}$ of constant local dimension $d$.
Each $h_i \defas (h)_{S_i} \otimes \1_{S_i^c}$ for some fixed hermitian operator $h$ acts non-trivially and in a translationally-invariant fashion on at most $|S_i|\le k$ sites, and $\| h_i \| \le 1$.
Two numbers $\alpha,\beta>0$.
The bit complexity of the matrix entries of $h_i$ is $\BigO(\Sigma(N))$.
}
\problempromise{
$\beta-\alpha \ge 1/f(N)$, and $\lmin(H)$ either $\ge\beta$, or $\le\alpha$.
}
\problemquestion{
\YES if  $\lmin(H) \ge \beta$, else \NO.
}
\end{problem}

\citeauthor{Gottesman2009} proved in~\citeyear{Gottesman2009} that \tilham$(\poly,1)$ is \QMAEXP-complete~\cite{Gottesman2009}, which has since been generalized to systems with lower local dimension~\cite{Bausch2016,Bausch2017}, variants of which again introduce a polynomially-scaling local coupling strength.
We emphasize that while~\citeauthor{Gottesman2009}'s definition restricts the bit precision $\Sigma$ to be constant, the input size to the problem---namely the chain length $N$---is already of size $\log N$. A poly-time reduction thus does not change the complexity class, and allowing matrix entries of size $\poly\log N$ is arguably natural. As noted in~\cite[Sec.~3.3]{Bausch2016}, an equivalent definition for \tilham can thus be obtained by relaxing the norm of the local terms to $\| h_i \| \le \poly N$, given the promise gap $f(N) = \Omega(\poly N)$.

Care has to be taken in defining \QMAEXP for the right input scaling. For \tilham$(\poly,1)$, the input size is given by the system size only, as all the local terms are specified by a constant number of bits.
This means that \tilham$(\poly,1)$ is indeed \QMAEXP hard, \emph{but for an input of size $\lceil\log(N)\rceil$, where $N$ is the size of the system}.
As Karp reductions are allowed for \QMAEXP, this does not change if we allow the local terms to scale polynomially in the system size; the problem input is still of size at most $\poly\log$, and thus constitutes a well-defined input for \QMAEXP with respect to this input size.
Informally, \QMAEXP(``$\poly \log(N)$-sized input'')~$<$~\QMA(``$\poly N$-sized input''), as only that scaling allows to both saturate and maintain the $1/\poly$ promise gap.
In short, the problem is \emph{easier} for translationally-invariant systems, as expected. (We refer the reader to the extended discussion in~\cite[Sec.~3.4]{Bausch2016}.)


How does the situation change if we allow a promise gap that scales differently? In particular, how hard is \lham$(\exp\poly)$?
In~\cite{Fefferman2016} the authors characterize this setup, which they use for a reduction from \PreciseQMA. The \PreciseQMA verifier has a $1/\exp\poly$ promise gap, instead of \QMA's usual $1/\poly$ promise gap. (Note that it is this very promise gap which naturally maps to the \lham's promise gap on the ground state energy.) They show that \lham$(1/\exp\poly)$ is complete for \PreciseQMA, which they further show equals \PSPACE.
We emphasize that the authors did not explicitly restrict the bit precision. Yet a natural restriction in this context is again $\Sigma(N) = \poly N$, as there are $m=\poly N$ local terms to specify. And a larger bit precision makes the input size too large for containment in \PreciseQMA.

A natural question to ask is thus: how hard is \tilham$(\exp\poly,\Sigma(N))$ for either $\Sigma(N)=\poly N$ or $\poly\log N$? Furthermore, is it easier because of the translational invariance, as it was for the $\poly$-promise-gap case?
We show that this is \emph{not} the case, and prove the following result.
\begin{theorem}\label{th:pspace}
\tilham$(\exp\poly,\poly)$ is \PSPACE-complete.
\end{theorem}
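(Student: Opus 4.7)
The plan is to establish \PSPACE-hardness by reduction from \lham$(\exp\poly,\poly)$, which is \PSPACE-complete via the \PreciseQMA=\PSPACE identity of~\cite{Fefferman2016}, and to obtain containment simply by viewing any \tilham instance as an \lham instance of the same input size. The two directions slot together cleanly, and the non-trivial direction is an application of \cref{main-theorem-1}.

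For hardness, given an \lham$(\exp\poly,\poly)$ instance $(H,\alpha,\beta)$ on $n$ qudits with $\beta-\alpha\ge 1/\exp(\poly(n))$, I would first package a description of $H$ into a single real parameter $\phi$ using the digital encoding of \cref{sec:digital-encoding} together with \cref{rem:2a-b}. I would choose a target simulation precision $\epsilon\defas(\beta-\alpha)/3=1/\exp(\poly(n))$ and $\eta\defas 1/\poly(n)$; because the length of $\phi$ depends only on $\log(1/\epsilon)$, $\log(1/\eta)$ and $\log\|H\|$, each of which is $\poly(n)$, the bit length of $\phi$ stays $\poly(n)$. I would then invoke \cref{lem:dovetailed-qpe,main-theorem-1} to obtain the translationally invariant chain $\Huniv(\phi,\Delta,T)$, with parameter scalings $T,\Delta=\exp(\poly(n))$ dictated by \cref{eq:Deltascaling}. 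Since $\log T$ and $\log\Delta$ are $\poly(n)$, these coupling constants fit within the $\Sigma=\poly$ bit precision permitted by \tilham, and the spin-efficiency part of \cref{main-theorem-1} gives a chain length $N=\BigO(|\phi|)=\poly(n)$, so the whole reduction runs in polynomial time.

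Next I would verify that the promise is preserved. By \cref{lem:dovetailed-qpe}, $\Huniv$ is a $(\Delta',\eta,\epsilon)$-simulation of $H$, and \cref{physical-properties}(i.) then places its lowest eigenvalue within $\epsilon$ of $\lmin(H)$; combined with $\epsilon=(\beta-\alpha)/3$ this preserves the \YES/\NO distinction. To enforce the \tilham normalization $\|h_i\|\le 1$ I would rescale $\Huniv$ by its largest local coupling, of order $\Theta(\Delta)$; the rescaled promise gap becomes $\Omega((\beta-\alpha)/\Delta)=1/\exp(\poly(N))$, i.e.\ still within \tilham$(\exp\poly,\poly)$. Containment is then immediate: any \tilham$(\exp\poly,\poly)$ instance on $N$ sites is in particular an \lham$(\exp\poly,\poly)$ instance with $m=\BigO(N)$ local terms and $\poly(N)$ bits of input, hence lies in \PreciseQMA=\PSPACE.

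The main obstacle will be the parameter bookkeeping. I need to check that the $1/\exp(\poly)$ precision required to preserve an exponentially small promise gap does not push any quantity out of polynomial bit complexity: specifically that the QPE runtime $\TPE=\poly(1/\epsilon)=\exp(\poly)$ fits inside the clock's time budget $T$ via \cref{eq:QPE-TPE}, that the resulting $\Delta$ mandated by \cref{eq:Deltascaling} still has $\poly$-bit representation, and that after the final rescaling the promise gap remains $\Omega(1/\exp(\poly(N)))$ rather than collapsing further. Because every relevant quantity depends only on $\log(1/\epsilon)$, $\log\|H\|$, $1/\eta$, and $n$ in a polynomial way, the bookkeeping is tight but does go through.
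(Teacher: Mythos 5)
Your proposal is correct and follows essentially the same route as the paper's (much terser) proof: reduce from the \PSPACE-complete \lham$(\exp\poly,\poly)$ problem of~\cite{Fefferman2016} via the universal simulation of \cref{main-theorem-1}, observe that the exponentially long QPE and the rescaling by $\Delta=\exp(\poly(n))$ attenuate the promise gap by only another exponential factor, and get containment from~\cite{Fefferman2016}. Your explicit bookkeeping of $|\phi|$, $N$, $T$, $\Delta$ and the bit precision $\Sigma$ simply fills in details the paper leaves implicit.
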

\begin{proof}
The result follows by \cref{main-theorem-1}. Specifying all the local terms in $H$ requies an exponentially long QPE computation to extract $\poly(N)$ many bits from a phase. Because a \PreciseQMA-complete local Hamiltonian $H$ already has a $1/\exp\poly(N)$-closing promise gap, this does not attenuate the resulting promise gap by more than another exponential factor.
Containment in \PSPACE follows by~\cite{Fefferman2016}.
\end{proof}

\Cref{th:pspace} illustrates a curious mismatch: irrespective of the promise gap scaling or matrix bit precision, \tilham features the system size $N$ as input. A $1/\poly N$ promise gap and $\poly \log N$ bit precision saturate this input, and yield a \QMAEXP-complete construction, as discussed above.
Yet when we need to specify a $1/\exp N$ promise gap, \emph{that} bit precision is the dominant input. So we might as well specify the local terms to the same $\poly N$ bit precision, which in turn allows the translationally-invariant system to simulate a non-translationally-invariant one.


\section{Applications to Holography} \label{holography-implications}

We can use the universal Hamiltonian constructions in this paper to construct a 2D-to-1D holographic quantum error correcting code (HQECC) with a local boundary Hamiltonian.
HQECCs are toy models of the AdS/CFT correspondence which capture many of the qualitative features of the duality~\cite{Pastawski:2015,osborne-17,Hayden:2016}.
Recently, a HQECC was constructed from a 3D bulk to a 2D boundary which mapped local Hamiltonians in the bulk to local Hamiltonians in the boundary~\cite{HQECC-local}.
The techniques in~\cite{HQECC-local} require at least a 2D boundary, and it was an open question whether a similar result could be obtained in lower dimensions.

Here we construct a HQECC from a 2D bulk to a 1D boundary which maps any (quasi-)local Hamiltonian in the bulk to a local Hamiltonian in the boundary.
A quasi $k$-local Hamiltonian is a generalisation of a $k$-local Hamiltonian, where instead of requiring that each term in the Hamiltonian acts on only $k$-spins, we require that each term in the Hamiltonian has Pauli rank at most $k$,\footnote{The Pauli rank of an operator is the number of terms in its Pauli decomposition.} along with some geometric restrictions on the interaction graph.
More precisely:

\begin{definition}[Quasi-local hyperbolic Hamiltonians]
  Let $\hyper$ denote $d$-dimensional hyperbolic space, and let $B_r(x)\subset\hyper$ denote a ball of radius $r$ centred at $x$.
  Consider an arrangement of $n$ qudits in $\hyper$ such that, for some fixed $r$, at most $k$ qudits and at least one qudit are contained within any $B_r(x)$.
  Let $Q$ denote the minimum radius ball $B_Q(0)$ containing all the qudits (which without loss of generality we can take to be centred at the origin).
  A quasi $k$-local Hamiltonian acting on these qudits can be written as:
  \begin{equation}
  \Hbulk = \sum_Z h^{(Z)}
  \end{equation}
  where the sum is over the $n$ qudits, and each term can be written as:
  \begin{equation}
  h^{(Z)} = h_{\mathrm{local}}^{(Z)} h_{\mathrm{Wilson}}^{(Z)}
  \end{equation}
  where:
  \begin{itemize}
  \item $h_{\mathrm{local}}^{(Z)}$ is a term acting non-trivially on at most $k$ qudits which are contained within some $B_r(x)$
  \item $h_{\mathrm{Wilson}}^{(Z)}$ is a Pauli operator acting non trivially on at most $\BigO(L-x)$ qudits which form a line between $x$ and the boundary of $B_Q(0)$
  \end{itemize}
\end{definition}

The extension to quasi-local bulk Hamiltonians allows us to consider using the HQECC to construct toy models of AdS/CFT with gravitational Wilson lines in the bulk theory.\footnote{Although in~\cite{HQECC-local} the result is only proved for local Hamiltonians, the proof can trivially be extended to encompass quasi-local bulk Hamiltonians in the 3D-2D setting too.}

\noindent With this definition, we obtain the following result.
\begin{theorem} \label{holography}
Consider any arrangement of $n$ qudits in $\hyper$, such that for some fixed $r$ at most $k$ qudits and at least one qudit are contained within any $B_r(x)$.
  Let $Q$ denote the minimum radius ball $B_Q(0)$ containing all the qudits.
  Let $\Hbulk = \sum_Z h_Z$ be any (quasi) $k$-local Hamiltonian on these qudits.

  Then we can construct a Hamiltonian $\Hboundary$ on a 1D boundary manifold $\mathcal{M}$ with the following properties:
  \begin{enumerate}
  \item%
    $\mathcal{M}$ surrounds all the qudits and has diameter $\BigO\left(\max\left(1,\log(k)/r\right) Q + \log\log n\right)$.
      \item%
    The Hilbert space of the boundary consists of a chain of qudits of length $\BigO\left(n\log n \right)$.
  \item%
    Any local observable/measurement $M$ in the bulk has a set of corresponding observables/measurements $\{M'\}$ on the boundary with the same outcome. A local bulk operator $M$ can be reconstructed on a boundary region $A$ if $M$ acts within the greedy entanglement wedge of $A$, denoted $\mathcal{E}[A]$.\footnote{The entanglement wedge, $\mathcal{E}_A$ is a bulk region constructed from the minimal area surface used in the Ryu-Takayanagi formula. It has been suggested that on a given boundary region, $A$, it should be possible to reconstruct all operators which lie in $\mathcal{E}_A$~\cite{Headrick:2014}. The greedy entanglement wedge is a discretised version defined in~\cite[Definition~8]{Pastawski:2015}}
  \item%
    $\Hboundary$ consists of 2-local, nearest-neighbour interactions between the boundary qudits.
  \item%
    $\Hboundary$ is a $(\Delta_L,\epsilon,\eta)$-simulation of $\Hbulk$ in the sense of \cref{app-sim}, with $\epsilon,\eta = 1/\poly(\Delta_L)$, $\Delta_L = \Omega\left(\|\Hbulk\|\right)$, and where the interaction strengths in $\Hboundary$ scale as  $\max_{ij}|\alpha_{ij}| = \BigO\left( (\Delta_L + 1/\eta + 1/\epsilon) \poly(e^R 2^{e^R})\right)$.
  \end{enumerate}
\end{theorem}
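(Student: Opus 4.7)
The plan is to follow the overall blueprint of the 3D--2D construction in~\cite{HQECC-local}, but replace the 2D boundary universal simulator used there with the 1D translationally-invariant universal model of \cref{main-theorem-1}. Concretely, I would proceed in three stages: (i) push $\Hbulk$ to a non-geometrically-local Hamiltonian on a 1D boundary via a HaPPY-type tessellation of $\hyper$ by perfect tensors; (ii) decompose the pushed Hamiltonian into a sum of few-body terms, each acting on a bounded number of boundary qudits; and (iii) invoke the 1D universal simulator to reproduce each such term by a local, nearest-neighbour translationally-invariant interaction on a 1D chain, assembling the pieces into a single boundary chain $\Hboundary$.

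\textbf{Stage 1 (bulk-to-boundary code).} I would tile a disk of radius $Q$ in $\hyper$ by a uniform tessellation supporting a perfect-tensor HQECC in the style of~\cite{Pastawski:2015}, placing bulk qudits at tiles and pushing the logical operators outwards through the tensor network to the boundary legs. Because each quasi-local $h_Z$ has bounded Pauli rank and its support lies in a ball of radius $\sqrt{|Z|}$, its image under the code can be localised to a boundary arc whose length is governed by the greedy causal wedge. The entanglement-wedge reconstruction property (item~3) follows automatically from the perfect-tensor property, just as in~\cite{Pastawski:2015}. The boundary manifold $\mathcal M$ inherits diameter $\Theta(Q)$ up to the tessellation-dependent constant $\log(k)/r$ factor, and the number of boundary qudits scales as the circumference of a hyperbolic disk, giving $\BigO(n\log n)$ after accounting for the $\log n$ refinement needed to resolve individual bulk terms.

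\textbf{Stage 2 (decomposition of the pushed Hamiltonian).} The pushed boundary Hamiltonian $\tilde H = \sum_Z \tilde h_Z$ is a sum of few-body terms, each supported on a short arc of the 1D boundary; however, these arcs are generally neither disjoint nor geometrically two-local. I would group the terms into a bounded number of colour classes so that within each class the supports are pairwise disjoint arcs, then treat each arc independently.

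\textbf{Stage 3 (1D universal simulation on each arc).} For each arc, \cref{main-theorem-1} provides a translationally-invariant, nearest-neighbour, two-body interaction $h_1(\phi)$ with a tunable one-body term whose parameters $(\phi,\Delta,T)$ can be chosen to $(\Delta_L,\eta,\epsilon)$-simulate the corresponding $\tilde h_Z$ on a local chain of $\poly(\|\tilde h_Z\|,1/\eta,1/\epsilon)$ qudits. Concatenating the simulator chains for different arcs along $\mathcal M$ and adding appropriate ``separator'' one-body terms at the boundaries of the arcs produces a single 1D translationally-invariant $\Hboundary$. The additive structure of encodings established in~\cite{Cubitt:2017} (together with the subspace-preserving idling construction used in the proof of \cref{lem:dovetailed-qpe}) guarantees that the overall simulation is a valid $(\Delta_L,\epsilon,\eta)$-simulation with $\epsilon,\eta = 1/\poly(\Delta_L)$ and interaction strengths $\BigO(\Delta_L)$, as in item~5.

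\textbf{Main obstacle.} The hardest part is ensuring compatibility between the three stages: the pushed terms $\tilde h_Z$ from Stage~1 are each supported on $\BigO(\log n)$ boundary qudits in a geometry dictated by the HQECC, while the universal simulator of Stage~3 demands a clean 1D chain of its own local Hilbert-space dimension and with its own idling region. I would reconcile this by choosing the colouring in Stage~2 so that each arc has enough ``padding'' for the simulator's ancillary tracks (clock, Turing-machine tape, idling register) without blowing up the qudit count beyond $\BigO(n\log n)$, and by choosing the parameter $\phi$ per arc to encode the corresponding $\tilde h_Z$ via the digital encoding of \cref{sec:digital-encoding}. The $\log\log n$ term in the diameter bound comes from the overhead needed for the phase-estimation precision to resolve the exponentially small spectral features introduced by the encoding of $\tilde h_Z$, and must be tracked carefully to verify item~1.
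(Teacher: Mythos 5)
Your Stage 1 matches the paper, but Stages 2--3 diverge from the paper's proof in a way that does not work. The pushed boundary Hamiltonian cannot be decomposed into a bounded number of colour classes of pairwise-disjoint short arcs: a bulk term at radius $x$ from the centre pushes out to a boundary operator of weight $\BigO(\tau^{R-x})$, so terms deep in the bulk have support on a constant fraction of the entire boundary and overlap with essentially every other pushed term. More fundamentally, the history-state universal model of \cref{main-theorem-1} does not compose additively over terms: it reproduces the spectrum of its target by running quantum phase estimation on $\ee^{\ii H\tau}$ for the \emph{whole} target Hamiltonian $H$, encoded digitally in the single phase $\phi$. Running a separate simulator chain per term $\tilde h_Z$ gives each chain its own private copy of the ``physical'' qudits, so the sum of the chains simulates a tensor product of independent systems rather than $\sum_Z \tilde h_Z$ acting on a shared Hilbert space; there is no lemma in the paper (or in~\cite{Cubitt:2017}) licensing the kind of arc-by-arc assembly you invoke.

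The paper's Step 3 instead applies \cref{main-theorem-1} \emph{once} to the full non-local boundary Hamiltonian $H_B$, encoding all of $H_B$ into the binary expansion of $\phi$ via the scheme of \cref{sec:digital-encoding}. The $\BigO(n\log n)$ qudit count then comes not from padding arcs but from counting the description length of $H_B$: the HQECC structure guarantees $H_B$ contains $\BigO(\tau^x)$ Pauli-rank-one operators of weight $\tau^{R-x}$ for each $0\le x\le R$, each specifiable with $\BigO(\text{weight})$ bits, giving $|\phi|=\BigO(R\tau^R)=\BigO(n\log n)$. (Relatedly, the $\log\log n$ in the diameter bound is the extra hyperbolic radius needed for the circumference to grow from $\Theta(n)$ to $\Theta(n\log n)$, not a phase-estimation precision overhead.) One further point you omit: to retain entanglement-wedge reconstruction the ``physical'' spins on the simulator's input track must be interspersed so their positions match the original boundary legs of the HQECC, with the interspersal pattern included in the description encoded in $\phi$. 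You would need to restructure Stages 2--3 along these lines for the argument to go through.
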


\begin{proof}
  There are three steps to this simulation. The first two steps follow exactly the same procedure as in~\cite{HQECC-local}.

\paragraph{Step 1.} Simulate $\Hbulk$ with a Hamiltonian which acts on the bulk indices of a HQECC in $\hyper$ of radius $R = \BigO\left(\max\left(1,\log(k)/r\right) L\right)$.

  In order to do this, we embed a tensor network composed of perfect tensors in a tessellation of $\hyper$ by a Coxeter polygon with associated Coxeter system $(W,S)$, and growth rate $\tau$.
  Note that in a tessellation of $\hyper$ by Coxeter polytopes the number of polyhedral cells in a ball of radius $r'$ scales as $\BigO(\tau^{r'})$, where we are measuring distances using the word metric, $d(u,v) = l_S(u^{-1}v)$. (See~\cite{HQECC-local} for a detailed discussion.)

  If we want to embed a Hamiltonian $\Hbulk$ in a tessellation we will need to rescale distances between the qudits in $\Hbulk$ so that there is at most one qudit per polyhedral cell of the tessellation.
  If $\tau^{r'} = k$, then
  \[
    \frac{r'}{r} = \frac{\log(k)}{\log(\tau) r} = \BigO\left(\frac{\log(k)}{r}\right).
  \]
  If $\log(k)/r\geq 1$ then the qudits in $\Hbulk$ are more tightly packed than the polyhedral cells in the tessellation, and we need to rescale the distances between the qudits by a factor of $\BigO\left(\log(k)/r\right)$.
  If $\log(k)/r < 1$ then the qudits in $\Hbulk$ are less tightly packed then the cells of the tessellation, and there is no need for rescaling.
  The radius $R$ of the tessellation needed to contain all the qudits in $\Hbulk$ is then given by
  \begin{equation}
    R =
    \begin{cases}
      \BigO\left(\log(k)/rL\right),& \text{if } \log(k)/r \geq 1 \\
      \BigO(L) & \text{otherwise}.
    \end{cases}
  \end{equation}

  After rescaling there is at most one qudit per cell of the tessellation.
  There will be some cells of the tessellation which do not contain any qudits.
  We can put ``dummy'' qudits in those cells which do not participate in any interactions, so their inclusion is just equivalent to tensoring the Hamiltonian with an identity operator.
  We can upper and lower bound the number of ``real'' qudits in the tessellation.
  If no cells contain dummy qudits then the number of real qudits in the tesselation is given by $n_{\max} = N = \BigO(\tau^R)$, where $N$ is the number of cells in the tessellation.
  By assumption, there is at least one real qudit in a ball of radius $r'$. Thus the minimum number of real qudits in the tessellation scales as $n_{\min} = \BigO(\tau^R / \tau^{r'} ) = \BigO(\tau^R) = \BigO(N)$, and $n = \Theta(\tau^R) = \Theta(N)$.

  If the tessellation of $\hyper$ by Coxeter polytopes is going to form a HQECC, the Coxeter polytope must have at least 5 faces~\cite[Theorem 6.1]{HQECC-local}.
  From the HQECC constructed in~\cite{Pastawski:2015} it is clear that this bound is achievable, so we will without loss of generality assume the tessellation we are using is by a Coxeter polytope with 5 faces.
  The perfect tensor used in the HQECC must therefore have 6 indices.

  It is known that there exist perfect tensors with 6 indices for all local dimensions $d$~\cite{Rains:1997}.
  We will restrict ourselves to stabilizer perfect tensors with local dimension $p$ for some prime $p$. These can be constructed for $p=2$~\cite{Pastawski:2015} and $p \geq 7$~\cite{Helwig:2013a}.
  Qudits of general dimension $d$ can be incorporated by embedding qudits into a $d$-dimensional subspace of the smallest prime which satisfies $p \geq d$ and $p=2$ or $p \geq 7$.
  We then add one-body projectors onto the orthogonal complement of these subspaces, multiplied by some $\Delta_S' \geq |\Hbulk|$ to the embedded bulk Hamiltonian.
  The Hamiltonian $\Hbulk'$ on the $n$ $p$-dimensional qudits is then a perfect simulation of $\Hbulk$.

 We can therefore simulate any $\Hbulk$ which meets the requirements stated in the theorem with a Hamiltonian which acts on the bulk indices of a HQECC in $\hyper$.

\paragraph{Step 2.} Simulate $\Hbulk$ with a Hamiltonian $H_B$ on the boundary surface of the HQECC.

We first set $H_B \coloneqq  H' + \Delta_SH_S$, where $H'$ satisfies $H'\Pi_{\mathcal{C}} = V(\Hbulk' \otimes \identity_\mathrm{dummy})V^\dagger$.
Here $V$ is the encoding isometry of the HQECC, $\Pi_{\mathcal{C}}$ is the projector onto the code-subspace of the HQECC, $\identity_\mathrm{dummy}$ acts on the dummy qudits and $H_S$ is given by
 \begin{equation}
    H_S \coloneqq \sum_{w \in W} \left( \identity - \Pi_{\mathcal{C}^{(w)}}\right).
 \end{equation}
 $\Pi_{\mathcal{C}^{(w)}}$ is the projector onto the codespace of the quantum error correcting code defined by viewing the $w$\textsuperscript{th} tensor in the HQECC as an isometry from its input indices to its output indices (where input indices are the bulk logical index, plus legs connecting the tensor with those in previous layers of the tessellation).

  Provided $\Delta_S \geq \|\Hbulk'\|$,~\cite[Lemma 6.9]{HQECC-local} ensure that $H_B$ meets the conditions in \cref{def:exact-sim} to be a perfect simulation of $\Hbulk'$ below energy $\Delta_S$, and hence---as simulations compose---a perfect simulation of $\Hbulk$.

  Naturally, there is freedom in this definition as there are many $H'$ which satisfy the condition stated.
  We will choose an $H'$ where every bulk operator has been pushed out to the boundary, so that a $1$-local bulk operator at radius $x$ corresponds to a boundary operator of weight $\BigO(\tau^{R-x})$.
  We will also require that the Pauli rank of every bulk operator has been preserved (see~\cite[Theorem D.4]{HQECC-local} for proof we can choose $H'$ satisfying this condition).

\paragraph{Step 3.} Simulate $H_B$ with a local, nearest neighbour Hamiltonian using the technique from \cref{main-theorem-1}.

  In order to achieve the scaling quoted we make use of the structure of $H_B$ due to the HQECC.
 It can be shown~\cite{HQECC-local} that $H_B$ will contain $\BigO(\tau^x)$ Pauli rank-1 operators of weight $\tau^{R-x}$ for $0 \leq x \leq R$. A Pauli rank-1 operator of weight $w$ can be specified using $\BigO(w)$ bits of information. So, if we encode $H_B$ in the binary expansion of $\phi$ as
 \[
 B(\phi) = \gamma'(R) \cdot_{x=0}^R \left[\gamma'(i)^{\cdot \tau ^{R-x}} \cdot \left(\gamma'(a_j) \cdot \gamma'(b_j) \cdot P_1 \cdot \ldots  \cdot P_{\tau^{R-x}} \right)  \right]^{\cdot \tau^x} \cdot \gamma'(L),
 \]
we have $|\phi| = \BigO(R\tau^R) = \BigO(n \log n)$.
The number of boundary spins in the final Hamiltonian therefore scales as $\BigO(n \log n)$.
The final boundary Hamiltonian is a $\left(\Delta, \epsilon, \eta \right)$-simulation of $\Hbulk$.

In order to preserve entanglement wedge reconstruction~\cite{Pastawski:2015}, the location of the spins containing the input state on the Turing machine work tape has to match the location of the original boundary spins.
So, instead of the input tape at the beginning of the $\MPE$ computation containing the input state, followed by a string of $\ket{0}$s, the two are interspersed.
Information about which points on the input tape contain the input state can be included in the description of the Hamiltonian to be simulated.

It is immediate from the definition of the greedy entanglement wedge~\cite[Definition 8]{Pastawski:2015} that bulk local operators in $\mathcal{E}(A)$ can be reconstructed on $A$. The boundary observables/measurements $\{M'\}$ corresponding to bulk observables/measurements $\{M\}$ which have the same outcome, because by definition simulations preserve the outcome of all measurements.
The claim follows.
\end{proof}

It should be noted that the boundary model of the resulting HQECC does not have full rotational invariance.
In order to use the universal Hamiltonian construction the spin chain must have a beginning and end, and the point in the boundary chosen to ``break'' the chain also breaks the rotational invariance.
However, it is possible to construct a HQECC with full rotational symmetry by using a history state Hamiltonian construction with periodic boundary conditions, as in ~\cite[Section 5.8.2]{Gottesman2009}.

In ~\cite[Section 5.8.2]{Gottesman2009} a Turing machine is encoded into a local Hamiltonian acting on a spin chain of length $N$ with periodic boundary conditions.
The ground space of the resulting Hamiltonian is $2N$ fold degenerate.
It consists of history states, where any two adjacent sites along the spin chain can act as boundary spins for the purpose of the Turing machine construction - giving rise to $2N$ distinct ground states.\footnote{The factor of two arises because there is freedom about which of the two adjacent sites is assigned to be the `left' boundary, and which is the `right' boundary.}

We can apply this same idea to construct a rotationally invariant HQECC, which maps a (quasi-)local bulk Hamiltonian, $\Hbulk$ in $\hyper$ to a local Hamiltonian $\Hboundary$ acting on a chain of $N$ qudits.
The code-space of the HQECC is $2N$-fold degnerate, and below the energy cut-off $\Hboundary$ has a direct sum structure:
\begin{equation}
\Hbulk \rightarrow \Hboundary|_{\leq \frac{\Delta}{2}} =
\begin{pmatrix}
\overline{H}_{\mathrm{bulk}} & 0  & \hdots & 0 \\
0 & \overline{H}_{\mathrm{bulk}}  & \hdots & 0 \\
\vdots & \vdots &   \ddots  & 0 \\
0 & 0  & \hdots  & \overline{H}_{\mathrm{bulk}}
\end{pmatrix}
\end{equation}
where each factor in the direct sum acts on one of the possible rotations of the boundary Hilbert space.

Observables are mapped in the same way as the Hamiltonian.
In order to preserve expectation values, we choose the map on states to be of the form:\footnote{See ~\cite[Section 7.1]{Cubitt:2017} for a discussion of maps on states in simulations.}
\begin{equation}
\rho_{\mathrm{boundary}} = \mathcal{E}_{\mathrm{state}}\left(\rho_{\mathrm{bulk}} \right)=
\begin{pmatrix}
\overline{\rho}_{\mathrm{bulk}} & 0  & \hdots & 0 \\
0 & 0 & \hdots & 0 \\
\vdots & \vdots &   \ddots  & 0 \\
0 & 0  & \hdots  & 0
\end{pmatrix}
\end{equation}

We can choose that the bulk state maps into the `unrotated' boundary Hilbert space, so that the geometric relationship between bulk and boundary spins is preserved.\footnote{Although the bulk states maps into one factor of the direct sum structure, every state in the low-energy portion of the boundary does have a bulk interpretation. But most of these states are rotated with respect to the bulk geometry.}

\section{Discussion} \label{discussion}

In this work we have presented a conceptually simple method for proving universality of spin models.
The reliance of this novel method on the ability to encode computation into the low energy subspace of a Hamiltonian suggests that there is a deep connection between universality and complexity.
This insight is made rigorous in \cite{kohler2021general}, where we derive necessary and sufficient conditions for spin systems to be universal simulators (as was done in the classical case \cite{Cubitt:2016}).

This new, simpler proof approach is also stronger, allowing to prove that the simple setting of translationally invariant interactions on a 1D spin chain is sufficient to give universal quantum models.
Furthermore, we have provided the first construction of translationally invariant universal model which is efficient in the number of qudits in the simulator system.

Translationally invariant interactions are more prevalent in condensed matter models than interactions which require fine tuning of individual interaction strengths.
However, a serious impediment to experimentally engineering either of the universal constructions in this paper is the local qudit dimension, which is very large---a problem shared by the earlier 2d translationally invariant construction in~\cite{PiddockBausch}.

An important open question is whether it is possible to reduce the local state dimension in these translationally invariant constructions, while preserving universality.
One possible approach would be to apply the techniques from~\cite{Bausch2016}, which were used to reduce the local dimension of qudits used in translationally invariant \QMA-complete local Hamiltonian constructions.

It would also be interesting to explore what other symmetries universal models can exhibit.
This is of particular interest for constructing HQECC, where we would like the boundary theory to exhibit (a discrete version of) conformal symmetry.

\section*{Acknowledgements}
J.\,B.~acknowledges support from the Draper's Junior Research Fellowship at Pembroke College.
T.\,S.\,C.~is supported by the Royal Society.
T.\,K.~is supported by the EPSRC Centre for Doctoral Training in Delivering Quantum Technologies [EP/L015242/1].
This work was supported by the EPSRC Prosperity Partnership in Quantum Software for Simulation and Modelling (EP/S005021/1).

\printbibliography
\end{document}

